\definecolor{MyDarkBlue}{rgb}{0,0.1,0.4}
\DeclareSymbolFont{EUR}{U}{eur}{m}{n}
\DeclareSymbolFontAlphabet{\eur}{EUR}
\DeclareSymbolFont{EUB}{U}{eur}{b}{n}
\DeclareSymbolFontAlphabet{\eub}{EUB}
\newcommand{\p}{\partial}
\newcommand\eubD{\eub{D}}
\newcommand\eubV{\eub{V}}
\newcommand\eubJ{\eub{J}}
\newcommand\eubL{\eub{L}}
\newcommand\bmupalpha{\bm\upalpha}
\newcommand\bmupbeta{\bm\upbeta}
\newcommand\bmupphi{\bm\upphi}
\newcommand\bmuprho{\bm\uprho}
\newcommand\bmupxi{\bm\upxi}
\newcommand\bmupzeta{\bm\upzeta}
\newcommand\R{\mathbb{R}}
\newcommand\N{\mathbb{N}}
\newcommand{\C}{\mathbb{C}}
\newcommand{\at}[1]{\vert\sb{\sb{#1}}}
\newcommand{\abs}[1]{\vert #1 \vert}
\newcommand{\norm}[1]{\left\|#1\right\|}
\renewcommand{\Re}{\mathop{\rm Re}}
\renewcommand{\Im}{\mathop{\rm Im}}
\theoremstyle{plain}
\newtheorem{lemma}{Lemma}
\theoremstyle{definition}
\theoremstyle{remark}
\newtheorem{remark}[lemma]{Remark}
\begin{document}


\title{
Vakhitov-Kolokolov and energy vanishing conditions
for linear instability of solitary waves in 
models of classical self-interacting spinor fields}

\author{
{Gregory Berkolaiko},$\!\sp{a}\,$
{Andrew Comech},$\!\sp{a,b}\,$
and
{Alim Sukhtayev}$\sp{a}$
\\
$\sp{a}${\small\it Texas A\&M University, College Station, TX 77843, U.S.A.}
\\
$\sp{b}${\small\it  Institute for Information Transmission Problems, Moscow 101447, Russia}
}



\date{\today}

\maketitle

\begin{abstract}
We study the linear stability of localized modes
in self-interacting spinor fields,
analyzing the spectrum of the operator
corresponding to linearization at solitary waves.
Following the generalization of the Vakhitov--Kolokolov approach,
we show that the bifurcation of real eigenvalues from the origin
is completely characterized
by the Vakhitov--Kolokolov condition $dQ/d\omega=0$
and by the vanishing of the energy functional.
We give the numerical data
on the linear stability
in the generalized Gross--Neveu model
and the generalized massive Thirring model
in the charge-subcritical, critical, and supercritical cases,
showing the agreement with the Vakhitov--Kolokolov
and the energy vanishing conditions.
\end{abstract}

\bigskip

\section{Introduction}
Models of self-interacting spinor fields
have been playing a prominent role in Physics
for a long time
\cite{jetp.8.260,PhysRev.83.326,PhysRev.103.1571,
RevModPhys.29.269}.
Widely considered are
the massive Thirring model (MTM) \cite{MR0091788},
the Soler model \cite{PhysRevD.1.2766},
and the massive Gross--Neveu model \cite{PhysRevD.10.3235,PhysRevD.12.3880},
in which
the self-interaction gives rise to
non-topological solitons of the form
$\phi\sb\omega(x)e^{-i\omega t}$.
Such solitary waves have also been found in
Dirac--Maxwell system (DM) \cite{wakano-1966,MR1364144,MR1386737,MR1618672}
and Dirac--Einstein systems 
\cite{PhysRevD.59.104020,MR2647868}.

We point out that
we treat the fermionic field classically, as a $c$-number,
completely leaving out the framework of the second quantization.
Because of this, we need to mention
the role played by such classical solitary wave solutions
in physics.
In
\cite{PhysRevD.10.4130,PhysRevD.12.3880}
such classical states were considered
from the point of view of
classical approximations of hadrons.
It was shown in \cite{MR3118823}
that 
the (classical) Dirac--Coulomb (DC) and DM systems
appear in the quantum field theory
where these solitary waves correspond to polarons,
formed due to interaction of fermions
with optical phonons or with the gravitational field \cite{MR766230}
(this reflects the Landau--Pekar approach
to the polaron
in the conventional nonrelativistic electron theory
\cite{polaron-1933,polaron-1946};
a similar mechanism is also responsible
for the formation of the Cooper pairs
in the microscopic mechanism of the superconductivity \cite{MR766230}).
Classical self-interacting
spinor field also appears in the Dirac--Hartree--Fock approach
in Quantum Chemistry
(see e.g. \cite{QUA:QUA20146}).
Coupled mode equations
in the nonlinear optics
and the theory of Bose-Einstein condensates
could also be treated as one-dimensional
Dirac equation with self-interaction
of a particular type
\cite{PhysRevLett.80.5117,MR2217129,MR2513792,PhysRevE.70.036618}.
The approach given in the present paper
is applicable to the stability analysis of
all such
systems of classical self-interacting spinors.

In spite of many attempts at stability
in the context of the classical spinor fields
(e.g.
\cite{MR592382,PhysRevLett.50.1230,MR848095,PhysRevD.34.644,PhysRevD.36.2422,2014arXiv1408.4171C}),
an exhaustive characterization of stability properties in these models
is still absent.
For many years, the stability analysis in the nonlinear
Schr\"odinger and other similar systems was based on
the Vakhitov--Kolokolov (VK) stability criterion \cite{VaKo},
with several important generalizations obtained
in \cite{MR783974,MR901236,MR1081647,MR2089513,MR2121936,MR2869071};
these results, however, were obtained for the systems
whose Hamiltonian is bounded below
and
do not extend to the Dirac-type systems,
only allowing some partial conclusions.
We point out that formally
(ignoring the unboundedness of the Hamiltonian in the spinor models)
our instability results
fit the general framework developed in a recent paper \cite{MR2869071}.
(Roughly, the conditions for the Jordan block type
degeneracies of the linearization at a solitary wave
remain the same, correctly describing the collisions of eigenvalues
at the origin, but one no longer knows which direction
the eigenvalues are located before and after the collision;
as a result, the count of the ``unstable''
positive-real-part
eigenvalues is no longer reliable.)

In view of recent results on
stability and instability for the nonlinear Dirac equation
\cite{PhysRevE.82.036604,MR2892774,linear-a,MR3208458}
it is becoming clear
that the VK criterion is still useful for the spinor systems in the
nonrelativistic limit,
when the amplitude of solitary waves is small.
In particular,
the ground states (``smallest energy solitary waves'')
in the charge-subcritical nonlinear Dirac equation
(with the nonlinearity of order $2k+1$, with $k<2/n$)
are linearly stable in the nonrelativistic limit
$\omega\lesssim m$,
which corresponds to solitary waves of small amplitudes.
The same linear stability
is expected to be true for the Dirac--Maxwell system
in the nonrelativistic limit $\omega\gtrsim -m$
\cite{dm-existence,MR3118823}.



In the present paper, following \cite{VaKo}
and \cite{MR1081647},
we show that
in the systems of self-interacting spinor fields
the condition
\begin{equation}
E(\phi\sb\omega)=0,
\end{equation}
alongside with the Vakhitov--Kolokolov condition
$dQ(\phi\sb\omega)/d\omega=0$,
indicates the collision of eigenvalues at the origin,
marking a possible border of the stability and instability regions.
Above, $E(\phi\sb\omega)$ and $Q(\phi\sb\omega)$
are the energy and the charge
of a corresponding solitary wave.

We then show that our theory applies
to the pure power generalized massive Thirring model,
with the nonlinearity of order $p=2k+1$, $k>0$.
Our numerical results show that
in all models with $k\ne 1$,
the energy functional
vanishes at some $\omega\sb k\in(-m,0)$
(with $\lim\sb{k\to 1}\omega\sb k=-m$).
We then compute the eigenvalues
of the linearizations at these solitary waves,
and show that there is a birth of a pair
of positive-negative eigenvalues
precisely at the value $\omega$
which corresponds to solitary waves
of zero energy.
On Figure~\ref{fig-mtm-0.5} below, we plot
the spectra of linearization
and the values of the energy for the solitary waves
in the generalized massive Thirring model with quadratic nonlinearity
($k=1/2$).

Let us mention that in the original, cubic massive Thirring model
(with $k=1$)
the solitary waves were recently shown to be orbitally stable in $H^1$
for $\omega\lesssim m$ \cite{2013arXiv1312.1019C},
and orbitally stable in $L^2$ for all $\omega\in(-m,m)$ \cite{2014LMaPh.104...21P}.
These results are based on the complete integrability of the (cubic) massive Thirring model.

We also mention the situation with the coupled-mode equations of the Dirac type which are
not Lorentz-invariant, such as in \cite{PhysRevLett.80.5117}.
The condition which describes
bifurcations of eigenvalues from the origin
is formulated in terms of vanishing of the determinant
consisting of the derivatives of the conserved
quantities at the solitary wave parameters;
in the case of coupled-mode equations,
the corresponding matrix is not diagonal
(see e.g. \cite{MR2869071}).
The present paper shows that
for the Lorentz-invariant nonlinear Dirac equations,
the corresponding matrix of derivatives is diagonal,
with the derivative of the momentum with respect to the speed of solitary waves
being proportional to the value of energy of the solitary wave.

Let us give an informal outline.
After the linearization at a solitary wave,
the isolated eigenvalue
$\lambda=0$ of the linearized equation
corresponds to several
Jordan blocks related to the symmetries of the system, most
importantly the $\mathbf{U}(1)$-invariance and the translational invariance.
When two purely imaginary eigenvalues collide at $0$, they do so by
joining one of these
Jordan blocks;
the collision then
produces a pair of real eigenvalues (one positive, one negative)
and results in linear (exponential) instability.
The VK condition $d Q/d\omega=0$ detects
the enlargement of the block corresponding to
the $\mathbf{U}(1)$ symmetry.
We study
the blocks corresponding to the translation invariance
and derive the condition for
their enlargement;
it turns out that the condition corresponds
to the energy of the solitary wave being zero.
Let us mention that similar methods of studying transition to instability
are employed in \cite{MR1081647,MR2089513}.

\bigskip

\section{Main results}
Let $\gamma\sp\mu$,
$0\le\mu\le n$,
be the $N\times N$ Dirac matrices
which satisfy $\{\gamma\sp\mu,\gamma\sp\nu\}=2 g\sp{\mu\nu}I_N$
($N\in\N$ is even),
with
$g\sp{\mu\nu}=\mathop{\rm diag}[1,-1,\dots,-1]$
the Minkowski metric.
For $\psi\in\C^N$,
denote $\bar\psi=\psi\sp\ast\gamma\sp 0$.
Consider the Lagrangian density
\begin{equation}\label{nld-lagrangian}
\mathscr{L}
=
\bar\psi(i\gamma\sp\mu\p\sb\mu-m)\psi
+\mathcal{F}(\bar\psi,\psi),
\end{equation}
with $m>0$, $\psi\in\C^N$ the spinor field,
and $\mathcal{F}:\;\C^N\times\C^N\to\C$,
which we assume is sufficiently smooth
and satisfies
$\abs{\mathcal{F}(\bar\psi,\psi)}=o(\abs{\psi}^2)$
for $\abs{\psi}\ll 1$.
We also assume that
$\mathcal{F}(\bar\psi,\psi)$
is $\mathbf{U}(1)$-invariant:
\[
\mathcal{F}(e^{-is}\bar\psi,e^{is}\psi)
=\mathcal{F}(\bar\psi,\psi),
\qquad
\psi\in\C^N,
\quad
s\in\R.
\]
The Euler-Lagrange equation
obtained by
taking the variation of \eqref{nld-lagrangian}
with respect to $\bar\psi$
(considered as independent of $\psi$)
leads to the equation
\begin{equation}\label{nld}
i\dot\psi
=D\sb m\psi-\beta\nabla\sb{\bar\psi}\mathcal{F}.
\end{equation}
Above,
$
D\sb m
=
-i\alpha\sp j\p\sb{j}+\beta m
$
is the Dirac operator,
with $\alpha\sp j=\gamma\sp 0\gamma\sp j$, $\beta=\gamma\sp 0$
the self-adjoint Dirac matrices.
We follow the convention that
$0\le \mu,\nu\le n$,
$1\le j,k\le n$,
and assume that there is a summation
with respect to repeated upper-lower indices
(unless specified otherwise).

\subsubsection*{Conservation laws and the Virial identity}

By N\"other's theorem,
due to the $\mathbf{U}(1)$-invariance
of the Hamiltonian,
there is a charge functional
\begin{equation}\label{eq:charge_def}
Q(\psi)=\int\psi\sp\ast(x,t)\psi(x,t)\,dx
\end{equation}
whose value is conserved along the trajectories.
(Here and below,
each integral is over $\R^n$ unless stated otherwise.)
The local law of charge conservation has the form
\begin{equation}\label{charge-local}
\p\sb\mu \mathscr{J}\sp\mu=0,
\end{equation}
where
\begin{equation}\label{def-j}
\mathscr{J}\sp\mu
=\bar\psi\gamma\sp\mu\psi
\end{equation}
is the four-vector of the charge-current density.

By \cite{MR0187642},
the density of the energy-momentum tensor
is given by
$
\mathscr{T}^{\mu\nu}
=
\frac{\p\mathscr{L}}
{\p(\p\sb\mu\psi)}
g^{\nu\rho}
\p\sb\rho\psi
-g^{\mu\nu}\mathscr{L}.
$
With
$
\mathscr{L}
=
\bar\psi(i\gamma\sp\mu\p\sb\mu-m)\psi
+\mathcal{F}(\bar\psi,\psi)
$
from \eqref{nld-lagrangian},
we have
\begin{equation}\label{nld-hamiltonian}
\mathscr{H}
=\mathscr{T}^{00}
=
-\bar\psi(i\gamma\sp j\p\sb j)\psi
+m\bar\psi\psi
-\mathcal{F}(\bar\psi,\psi).
\end{equation}
The components of the energy-momentum tensor
$T^{\mu\nu}=\int\mathscr{T}^{\mu\nu}\,dx$
are given by
\begin{eqnarray}\label{t-jk}
&
T^{00}=E=\int\mathscr{H}\,dx,
\nonumber
\\
&
T^{0k}=T^{k0}
=
i\int
\bar\psi
\gamma^0 g^{k\rho}\p\sb\rho\psi\,dx
=i g^{k\rho}\langle\psi,\p\sb\rho\psi\rangle,
\nonumber
\\
&
T^{jk}=
\big\langle\psi,i\alpha^{j}\p_\rho\psi\big\rangle
g^{\rho k}
-g\sp{jk}L,
\end{eqnarray}
where
$L=\int\mathscr{L}\,dx$.
Note that $T^{\mu\nu}$ is hermitian.

Now let us consider a solitary wave solution
\begin{equation}\label{sw}
\psi\sb\omega(x,t)=\phi\sb\omega(x)e^{-i\omega t},
\end{equation}
with $\phi\sb\omega(x)\in\C^N$ of Schwartz class in $x$.
Comparing
\eqref{nld-lagrangian}
and
\eqref{nld-hamiltonian},
we obtain:
\begin{equation}\label{l-h-q}
L(\psi\sb\omega)=-E(\psi\sb\omega)+\omega Q(\psi\sb\omega).
\end{equation}
By the Stokes theorem,
the local form of the charge conservation
\eqref{charge-local}
leads to
\[
0
=\p\sb t\textstyle\int \mathscr{J}\sp{0}x\sp k\,dx
=-\textstyle\int(\p\sb j\mathscr{J}\sp j)x\sp k\,dx
=
\textstyle\int\mathscr{J}\sp k\,dx.
\]
therefore,
for a solitary wave \eqref{sw}, one has:
\begin{equation}\label{j-zero}
{J}^{k}:=\int\mathscr{J}\sp k\,dx=0,
\qquad
1\le k\le n.
\end{equation}
Similarly,
since the energy-momentum tensor is the conserved N\"other
current associated with space-time translations,
for any fixed $0\le\nu\le n$,
there is the identity $\p\sb\mu \mathscr{T}^{\mu\nu}=0$
which follows from the Euler--Lagrange equations.
This leads to
\begin{equation}\label{t-zero}
T^{j\nu}=T^{\nu j}:=\int\mathscr{T}^{\nu j}\,dx=0,
\qquad
1\le j\le n.
\end{equation}


\medskip

We decompose the Hamiltonian functional into
\[
E(\psi)=K(\psi)+M(\psi)+V(\psi),
\]
with
\begin{eqnarray}\label{def-k-m-v}
&
K(\psi)=\int\psi\sp\ast(-i\alpha\cdot\nabla)\psi\,dx,
\qquad
M(\psi)=m\int\psi\sp\ast\beta\psi\,dx,
\qquad
V(\psi)
=
-\int\mathcal{F}(\bar\psi,\psi)\,dx.
\end{eqnarray}
Combining \eqref{t-jk} and \eqref{t-zero},
we conclude that
\begin{equation}\label{t-q}
i\int\phi\sb\omega\sp\ast\alpha\sp j\p\sb k\phi\sb\omega\,dx
=
i\int\psi\sb\omega\sp\ast\alpha\sp j\p\sb k\psi\sb\omega\,dx
=\delta^{j}\sb{k}L(\psi\sb\omega).
\end{equation}
Taking the trace of \eqref{t-q},
we obtain \emph{the Virial identity}
\begin{equation}\label{KL}
K(\phi\sb\omega)
=K(\psi\sb\omega)
=-n L(\psi\sb\omega),
\end{equation}
where
$K(\psi)$ is defined in \eqref{def-k-m-v}.

Note that for a solitary wave
$\psi\sb\omega(x,t)=\phi\sb\omega(x)e^{-i\omega t}$,
one has
$
E(\psi\sb\omega)=E(\phi\sb\omega)
$
since the Hamiltonian density
\eqref{nld-hamiltonian}
does not contain the time derivatives;
similarly,
the values of $K$, $M$, and $V$
are the same on $\psi\sb\omega$ and $\phi\sb\omega$.

\subsection*{Linearization at a solitary wave}

We assume that there are solitary wave solutions
to \eqref{nld}
of the form \eqref{sw}
with $\omega\in\Omega$,
where $\Omega$ is an open set.
Many quantities appearing below
will depend on $\omega$, which we will indicate
with the subscript $\omega$;
sometimes the subscript will be omitted to shorten the notations.

To study the linear stability
of the solitary waves \eqref{sw},
we consider the solution
$\psi$
in the form
\[
e^{-i\omega t}\big(\phi\sb\omega(x)+\rho(x,t)\big).
\]
The linearized equation is not $\C$-linear in $\rho$.
To apply the linear operator theory,
we write the linearized equation on $\rho$
in the $\C$-linear form
\[
\dot \bmuprho=\eubJ\eubL(\omega)\bmuprho,
\qquad
\bmuprho
=\begin{bmatrix}\Re\rho\\\Im\rho\end{bmatrix}
\]
with
\[
\eubJ=\begin{bmatrix}0&I\sb N\\-I\sb N&0\end{bmatrix},
\qquad
\eubL(\omega)=\eubD\sb m+\eubV,
\]
where
$
\eubD\sb m=\eubJ\bmupalpha\sp k\p\sb k+\bmupbeta m
$
and
\[
\bmupalpha\sp k=
\begin{bmatrix}\Re\alpha\sp k&-\Im\alpha\sp k
\\
\Im\alpha\sp k&\Re\alpha\sp k
\end{bmatrix}
\qquad
\bmupbeta=
\begin{bmatrix}\Re\beta&-\Im\beta
\\
\Im\beta&\Re\beta
\end{bmatrix}.
\]
The matrix-valued function
$\eubV$ is self-adjoint
and of Schwartz class in $x$;
its dependence on $\omega$
is via $\phi\sb\omega$.

\subsubsection*{The structure of the null space}

Due to the $\mathbf{U}(1)$-invariance of the equation, the
perturbation $\rho(x,t)$ that corresponds to infinitesimal multiplication
of the solitary wave by a constant unitary phase is in the kernel of
the linearization $\eubJ\eubL$.
Similarly, the translation invariance and the rotational symmetry
result in vectors in the kernel of the linearized operator.
As a result,
\begin{equation}
\eubJ\bmupphi\sb\omega,\,
\p\sb j\bmupphi\sb\omega\in\ker \eubJ\eubL(\omega),
\end{equation}
where
$
\bmupphi\sb\omega
=\begin{bmatrix}\Re\phi\sb\omega\\\Im\phi\sb\omega\end{bmatrix}.
$
These inclusions follow
from taking the derivatives in $\omega$ and $x\sp j$
of the relation $E'(\phi\sb\omega)=\omega Q'(\phi\sb\omega)$.
One can check by direct computation
that there is a Jordan block
corresponding to each of these eigenvectors:
\begin{equation}\label{xi-xi}
\eubJ\eubL\p\sb\omega\bmupphi\sb\omega=\eubJ\bmupphi\sb\omega,
\qquad
\eubJ\eubL\bmupxi\sb j=\p\sb j\bmupphi\sb\omega,
\end{equation}
where
\begin{equation}\label{def-xi}
\ \bmupxi\sb j=\omega x\sp j\eubJ\bmupphi\sb\omega
-\frac{1}{2}\bmupalpha\sp j\bmupphi\sb\omega.
\end{equation}

\medskip

By \eqref{xi-xi},
there are Jordan blocks of size at least $2$
corresponding to each of the vectors
$\eubJ\bmupphi\sb\omega$, $\p\sb j\bmupphi\sb\omega$
from the null space.
When two (or more)
eigenvalues collide at $\lambda=0$,
at a particular value of $\omega$,
they can
instantaneously join one of these two types of Jordan blocks permanently
residing at $0$.
We now consider these two events.

\subsubsection*{$\mathbf{U}(1)$-invariance
and Vakhitov--Kolokolov criterion}

Let us revisit the VK criterion
from the point of view of the size
of a particular Jordan block at $\lambda=0$.
By \eqref{xi-xi},
the Jordan block of $\eubJ\eubL$
corresponding to the unitary invariance
is of size at least $2$.
The size of this Jordan block
jumps up
when we can solve the generalized eigenvector equation
$\eubJ\eubL \bm u=\p\sb\omega\bmupphi\sb\omega$.
Since $\eubL$ is Fredholm
(this follows from $\eubL$ being self-adjoint
and $0\not\in\sigma_{\rm ess}(\eubL)=\R\backslash(-m,m)$;
see e.g. \cite{MR929030}),
such $\bm u$ exists
if $\p\sb\omega\bmupphi$ is orthogonal
to the null space of $(\eubJ\eubL)\sp\ast=-\eubL\eubJ$.
The generalized eigenvector
$\p\sb\omega\bmupphi$ is always orthogonal to
$\eubJ^{-1}\p\sb k\bmupphi\in\ker\eubL\eubJ$, $1\le k\le n$.
Indeed,
we have:
\begin{equation}\label{p-phi-j-phi-0}
\langle\p\sb\omega\bmupphi,\eubJ\p\sb k\bmupphi\rangle
=
-\langle\bmupphi\sb\omega,\bmupxi\sb k\rangle
=
\frac{\langle\bmupphi,\bmupalpha\sp k\bmupphi\rangle}{2}
-\omega\langle\bmupphi,x\sp k\eubJ\bmupphi\rangle,
\end{equation}
where we used \eqref{xi-xi}
and self-adjointness of $\eubL$.
By \eqref{j-zero},
the first term in the right-hand side is zero.
The second term in the right-hand side
is zero due to skew-symmetry of $\eubJ$.
Thus,
\begin{equation}\label{p-phi-j-phi}
\langle\p\sb\omega\bmupphi\sb\omega,\eubJ\p\sb k\bmupphi\sb\omega\rangle=0,
\qquad
1\le k\le n.
\end{equation}

We now need to check whether
$\p\sb\omega\bmupphi\sb\omega$ is orthogonal to
$\bmupphi\sb\omega\in\ker\eubL\eubJ$.
The orthogonality condition takes the form
\begin{equation}\label{condition-vk}
\langle\p\sb\omega\bmupphi\sb\omega,\bmupphi\sb\omega\rangle
=\frac 1 2\p\sb\omega Q(\phi\sb\omega)=0.
\end{equation}
This is in agreement
with the Vakhitov--Kolokolov criterion
$
\frac{d}{d\omega}Q(\phi\sb\omega)<0
$
derived in the context of the nonlinear Schr\"odinger equation
and more abstract Hamiltonian systems
with $\mathbf{U}(1)$-invariance \cite{VaKo,MR901236}.

\subsubsection*{Translation invariance and
the energy criterion of linear instability}

Let us find the condition
for the increase in size of the Jordan block corresponding
to translational invariance.
This happens if
there is $\bmupzeta$ such that
$\eubJ\eubL\bmupzeta=\bmupxi$,
where 
$\bmupxi=\sum\sb{j=1}\sp{n}c\sb j\bmupxi\sb j\ne 0$
is some nontrivial linear combination of generalized eigenvectors.
Since $\eubL$ is Fredholm,
the sufficient condition
is that $\bmupxi$
is orthogonal to vectors from $\ker\eubL\eubJ$.
By \eqref{p-phi-j-phi-0}
and \eqref{p-phi-j-phi}, one always has
\begin{equation}\label{orthtophi}
\langle\omega x^j\eubJ\bmupphi\sb\omega
-\frac 1 2\bmupalpha^j\bmupphi\sb\omega,
\bmupphi\sb\omega\rangle=0, 
\end{equation}
ensuring orthogonality of $\bmupxi$ to
$\bmupphi\sb\omega\in\ker\eubL\eubJ$.

Now we need to ensure orthogonality
to all of $\eubJ^{-1}\p\sb k\bmupphi\sb\omega\in\ker\eubL\eubJ$,
$1\le k\le n$.
We may write this condition
in the form
\begin{equation}\label{det-b-zero}
\det C\sb{jk}(\omega)=0,
\quad
C\sb{jk}(\omega):=-2\langle\bmupxi\sb j,\eubJ\p\sb k\bmupphi\sb\omega\rangle.
\end{equation}
Substituting $\bmupxi\sb j$ from \eqref{def-xi},
we have:
\begin{eqnarray}\label{a-phi-zero-0}
C\sb{jk}(\omega)
=
\langle\bmupalpha\sp j\bmupphi\sb\omega
-2\omega x\sp j\eubJ\bmupphi\sb\omega,
\eubJ\p\sb{k}\bmupphi\sb\omega\rangle.
\end{eqnarray}
Since
\[
\langle 2 x\sp j\eubJ\phi\sb\omega,
\eubJ\p\sb{k}\bmupphi\sb\omega\rangle
=\int x^j \p_k \left( \bmupphi\sb\omega^\ast\bmupphi\sb\omega\right)\,dx
=-\delta\sb k\sp j Q(\phi\sb\omega),
\]
we rewrite \eqref{a-phi-zero-0} as
\begin{eqnarray}\label{a-phi-zero}
C\sb{jk}(\omega)
=
\langle \bmupalpha^j\bmupphi\sb\omega, \eubJ\p_k\bmupphi\sb\omega \rangle
+\omega\delta\sb k\sp j Q(\omega).
\end{eqnarray}
Using
\eqref{l-h-q}, \eqref{t-q},
and
\eqref{a-phi-zero},
we get
$
C\sb{jk}(\omega)=E(\omega)\delta\sb{jk}.
$
Thus, the condition \eqref{det-b-zero}
for the increase of the size of the Jordan block
in the nonlinear Dirac equation
is equivalent to
\begin{equation}\label{CP-criterion}
E(\phi\sb\omega)=0.
\end{equation}

\begin{lemma}\label{lemma-ni-positive}
Let
$\mathcal{F}(\bar\psi,\psi)$
be homogeneous of degree $k+1$ in $\bar\psi$, $\psi$,
and assume that
$\mathcal{F}(\bar\psi,\psi)\ge 0$.
Then one has
\begin{equation}\label{e-positive}
E(\phi\sb\omega)>0
\qquad
\mbox{for}\quad
\omega>0.
\end{equation}
\end{lemma}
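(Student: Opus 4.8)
The plan is to reduce the statement to a single algebraic identity of the form $E(\phi_\omega)=\omega Q(\phi_\omega)-c\,V(\phi_\omega)$ with a constant $c$ coming from the homogeneity of $\mathcal{F}$, and then simply to read off the signs of the three factors. The ingredients are the profile equation satisfied by $\phi_\omega$, the decomposition $E=K+M+V$ of \eqref{def-k-m-v}, and Euler's relation for a homogeneous $\mathcal{F}$.

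First I would record the profile equation. Substituting the ansatz \eqref{sw} into \eqref{nld} and using that the $\mathbf{U}(1)$-invariance makes the nonlinear term carry the same phase $e^{-i\omega t}$ as $\psi_\omega$, one obtains the time-independent equation $\omega\phi_\omega=D_m\phi_\omega-\beta\nabla_{\bar\phi}\mathcal{F}(\bar\phi_\omega,\phi_\omega)$. Multiplying by $\phi_\omega^\ast$, integrating, and using the definitions of $K$ and $M$ in \eqref{def-k-m-v} together with $\phi_\omega^\ast\beta=\bar\phi_\omega$, one gets
\[
\omega Q(\phi_\omega)=K(\phi_\omega)+M(\phi_\omega)-\int\bar\phi_\omega\cdot\nabla_{\bar\phi}\mathcal{F}\,dx.
\]

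The key step is to evaluate the last integral by homogeneity. Euler's theorem gives $\bar\psi\cdot\nabla_{\bar\psi}\mathcal{F}+\psi\cdot\nabla_\psi\mathcal{F}=(k+1)\mathcal{F}$, while differentiating the invariance $\mathcal{F}(e^{-is}\bar\psi,e^{is}\psi)=\mathcal{F}(\bar\psi,\psi)$ in $s$ at $s=0$ gives $\bar\psi\cdot\nabla_{\bar\psi}\mathcal{F}=\psi\cdot\nabla_\psi\mathcal{F}$; hence $\bar\phi_\omega\cdot\nabla_{\bar\phi}\mathcal{F}=\tfrac{k+1}{2}\mathcal{F}$, reading the degree $k+1$ as the total degree in the components of $\bar\psi$ and $\psi$. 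Integrating and recalling $V(\phi_\omega)=-\int\mathcal{F}\,dx$ turns the previous display into $\omega Q=K+M+\tfrac{k+1}{2}V$; subtracting $E=K+M+V$ then yields the clean identity
\[
E(\phi_\omega)=\omega Q(\phi_\omega)-\tfrac{k-1}{2}\,V(\phi_\omega).
\]

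Finally I would read off the signs. For a nontrivial profile $Q(\phi_\omega)=\int|\phi_\omega|^2\,dx>0$, so $\omega>0$ forces $\omega Q>0$; the hypothesis $\mathcal{F}\ge0$ gives $V=-\int\mathcal{F}\,dx\le0$; and the standing assumption $|\mathcal{F}|=o(|\psi|^2)$ excludes homogeneity degree $\le2$, so $k+1>2$ and the coefficient $\tfrac{k-1}{2}$ is strictly positive. Hence $-\tfrac{k-1}{2}V\ge0$ and $E(\phi_\omega)\ge\omega Q(\phi_\omega)>0$, which is \eqref{e-positive}. I expect the only delicate point to be the degree bookkeeping in the Euler step: splitting the total homogeneity degree between the $\bar\psi$- and $\psi$-derivatives via the $\mathbf{U}(1)$-invariance, and checking that the resulting coefficient of $V$ carries the sign that cooperates with $V\le0$ (equivalently, that the effective Euler exponent exceeds $1$, which is exactly what superquadratic growth guarantees). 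Note that the argument uses only the profile equation and homogeneity, and does not invoke the Virial identity \eqref{KL}.
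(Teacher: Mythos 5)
Your proof is correct, and its core is the same identity the paper's proof rests on: testing the stationarity relation $E'(\phi_\omega)=\omega Q'(\phi_\omega)$ against $\phi_\omega$ itself. Indeed, the paper's amplitude-scaling relation $\omega Q=K+M+(k+1)V$, obtained by differentiating along the family $\phi_\lambda=\lambda\phi$, is exactly your ``multiply the profile equation by $\phi_\omega^\ast$ and integrate'' identity, with Euler's theorem playing the role of the $\lambda$-derivative of $V(\lambda\phi)$. Two differences are worth recording. First, you never invoke the spatial dilation family $\phi_\lambda(x)=\phi(x/\lambda)$: the paper derives the corresponding relation $\omega Q=\frac{n-1}{n}K+M+V$ and the consequence $\frac{1}{n}K=-kV$, but these are not actually needed for its final inequality $E=\omega Q-kV>0$, so your leaner route loses nothing (and your closing remark that the Virial identity is not used is accurate). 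Second, your coefficient $\frac{k-1}{2}$ differs from the paper's $k$ because of a normalization choice: the paper reads ``homogeneous of degree $k+1$ in $\bar\psi,\psi$'' as degree $k+1$ in the quadratic (bilinear) quantities, i.e.\ $\mathcal{F}(t\bar\psi,t\psi)=t^{2(k+1)}\mathcal{F}(\bar\psi,\psi)$ --- consistent with $k=1$ being the cubic Thirring/Soler case --- whereas you read it as total degree $k+1$ in the components; under the paper's convention your Euler step gives $\bar\psi\cdot\nabla_{\bar\psi}\mathcal{F}=(k+1)\mathcal{F}$ and hence $E=\omega Q-kV$, the paper's identity. This discrepancy is harmless rather than a gap: as you yourself note, the argument only needs the coefficient multiplying $-V$ to be positive, i.e.\ the effective Euler exponent to exceed $1$, and the standing assumption $|\mathcal{F}(\bar\psi,\psi)|=o(|\psi|^2)$ guarantees this under either reading of the homogeneity.
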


\begin{proof}
Substituting into
$\p\sb\lambda\at{\lambda=1}E(\phi\sb\lambda)
=\omega\p\sb\lambda\at{\lambda=1}Q(\phi\sb\lambda)$
the families
$\phi\sb\lambda(x)=\phi(x/\lambda)$
and $\phi\sb\lambda(x)=\lambda\phi(x)$,
we show that quantities
\eqref{def-k-m-v}
satisfy the relations
\[
\omega Q=\frac{n-1}{n}K+M+V,
\quad
\omega Q=K+M+(k+1)V.
\]
These relations yield $\frac 1 n K(\phi)=-k V(\phi)$.
With $V:=-\int \mathcal{F}(\bar\psi,\psi)\,dx<0$,
for $\omega>0$ we arrive at
$E=\omega Q-k V>0$.
\end{proof}

%
It follows that
in the pure power case
this instability mechanism
could only play the role
for the nonlinear Dirac solitary waves with $\omega<0$.

\subsubsection*{Rotational symmetry}

In the (3+1)D case,
the kernel of the linearized operator
contains the eigenvectors
due to the rotational symmetry.
For $1\le j\le 3$, denote
\[
\varSigma\sb j=\mathop{\rm diag}[\sigma\sb j,\sigma\sb j],
\quad
\bm\Sigma\sb j
=
\begin{bmatrix}\Re\varSigma\sb j&-\Im\varSigma\sb j\\
\Im\varSigma\sb j&\Re\varSigma\sb j\end{bmatrix}.
\]
Then
\begin{equation}\label{def-theta}
\bm\Theta_j=-\eubJ\bm\Sigma\sb j\bmupphi
+2 \epsilon_{j k l} x^k\p_l\bmupphi\sb\omega
\in\ker\eubJ\eubL(\omega)
\end{equation}
are the eigenvectors
from the null space
which correspond to infinitesimal rotations.
Above, $\epsilon\sb{j k l}$ are the Levi--Civita symbols.

It turns out that
$\eubJ\bmupphi\in\ker\eubJ\eubL$
is a linear combination
of $\bm\Theta\sb j$, $1\le j\le 3$,
so that these three eigenvectors
only contribute two into the dimension of $\ker\eubJ\eubL$.

One can check that
the condition for
the generalized eigenvector
$\p\sb\omega\bmupphi$
to be orthogonal to 
$\eubJ\bm\Theta_j$, $1\le k\le 3$,
is given by the VK condition
\[
\langle\p\sb\omega\bmupphi,\eubJ\bm\Theta_3\rangle
=\langle\p\sb\omega\bmupphi\sb\omega,\bmupphi\sb\omega\rangle.
\]
One can also check that
the generalized eigenvectors $\bmupxi\sb j$, $1\le j\le 3$,
are always orthogonal to $\bm\Theta\sb k$, $1\le k\le 3$:
\[
\Big\langle\omega x^j\eubJ\bmupphi
-\frac 1 2\alpha^j\bmupphi,
\eubJ\bm\Theta_k
\Big\rangle=0.
\]
Therefore,
the presence of these eigenvectors
in the kernel of $\eubJ\eubL$
in the (3+1)D case does not affect the
size of the Jordan blocks
associated with unitary and translational
invariance;
these sizes are completely characterized
by the conditions
\eqref{condition-vk} and \eqref{CP-criterion}.

There are no new Jordan blocks
associated to $\bm\Theta\sb j$.
For example, for the standard Ansatz
\begin{equation}\label{phi-is}
\phi\sb\omega(x)
=\begin{bmatrix}
g\sb\omega(r)\begin{pmatrix}1\\0\end{pmatrix}\\
if\sb\omega(r)
\begin{pmatrix}\cos\theta\\e^{i\varphi}\sin\theta\end{pmatrix}
\end{bmatrix},
\end{equation}
one has
$\bm\Theta\sb 3=-\eubJ\bmupphi$,
$\bm\Theta\sb 1=\eubJ\bm\Theta\sb 2$,
so that
\[
\langle\bm\Theta\sb 1,\eubJ\bm\Theta\sb 2\rangle
=\langle\bm\Theta\sb 1,\bm\Theta\sb 1\rangle
>0.
\]
As a consequence,
the Jordan block corresponding to
$\bm\Theta\sb 3$ is the same as the one
corresponding to the unitary invariance
(whose size is controlled by the VK condition \eqref{condition-vk}),
and there are no Jordan blocks corresponding to
$\bm\Theta\sb 1$, $\bm\Theta\sb 2$
since neither is orthogonal to $\ker(\eubJ\eubL)\sp\ast
\ni \eubJ^{-1}\bm\Theta\sb k$.

\begin{remark}\label{remark-2d}
In the $(2+1)$D case,
the story is similar:
the eigenvector from the null space
which corresponds to the infinitesimal rotation
coincides with $\eubJ\bmupphi$,
the same eigenvector which corresponds to
the unitary symmetry.
The size of the corresponding Jordan block
jumps (indicating collision of eigenvalues
at the origin)
if and only if the Vakhitov--Kolokolov condition
$\p\sb\omega Q(\phi\sb\omega)=0$
is satisfied.
\end{remark}



\section{Applications}
\label{sect-applications}

\subsection*{Generalized massive Thirring model}

The (generalized)
massive Thirring model in (1+1)D
is characterized by the Lagrangian
\begin{equation}\label{lagrange-mtm}
\mathscr{L}\sb{\rm MTM}
=\bar\psi(i\gamma\sp\mu\p\sb\mu-m)\psi
 +\frac{\norm{\bar\psi\gamma\sp\mu\psi}\sb g^{1+k}}{1+k},
\end{equation}
where
$\norm{\cdot}\sb g$
is the length in the Minkowski metric,
\[
\norm{\zeta}\sb g^2
=g\sb{\mu\nu}\zeta\sp\mu\zeta\sp\nu,
\qquad
\zeta\in\R^{1+1},
\]
with
$g=\mathop{\rm diag}[1,-1]$ the Minkowski tensor.
We notice that
$
\norm{\bar\psi\gamma\sp\mu\psi}\sb g^2
=(\psi\sp\ast\psi)^2-(\psi\sp\ast\alpha\sp 1\psi)^2\ge 0.
$

The choice $k=1$ leads to the nonlinear Dirac equations
with cubic nonlinearities
originally considered in  \cite{MR0091788}.
In the nonrelativistic limit
$\omega\lesssim m$,
for $k\in(0,2)$,
one has spectral stability
according to \cite{linear-a};
for $k>2$, there is linear instability
by \cite{MR3208458}.

There is an interesting behaviour for $\omega$ away
from the nonrelativistic limit.
It turns out that for any $k\ne 1$,
there is the following phenomenon:
there is $\omega\sb{E}=\omega\sb{E}(k)\in(-m,0)$
such that $E(\phi\sb\omega)\at{\omega=\omega\sb{E}}=0$.
According to our theory, at $\omega=\omega\sb{E}$,
two purely imaginary eigenvalues collide at the origin,
turning into a pair of two real (one positive, one negative)
eigenvalues for $\omega\in(-m,\omega\sb{E})$,
guaranteeing the linear instability in this region of frequencies.
On Figures~\ref{fig-mtm-0.5},
\ref{fig-mtm-1},
\ref{fig-mtm-2},
and~\ref{fig-mtm-3}
we plot the results of the numerical analysis
for the cases $k=1/2$, $1$, $2$, and $3$,
giving both the values of the energy and charge
functionals (as functions of $\omega\in(-m,m)$;
we take $m=1$)
and the spectrum of the operator
corresponding to the linearization
at a solitary wave.
We show that indeed the collision of eigenvalues at the origin
corresponds to either $\p\sb\omega Q(\phi\sb\omega)=0$
or to $E(\phi\sb\omega)=0$.

\subsection*{Gross--Neveu model}

The Soler model \cite{PhysRevD.1.2766}
has the Lagrange density
\begin{equation}\label{lagrange-soler}
\mathscr{L}\sb{\rm Soler}
=\bar\psi(i\gamma\sp\mu\p\sb\mu-m)\psi
+F(\bar\psi\psi).
\end{equation}
The corresponding equation is
\begin{equation}\label{soler}
\omega\psi
=(-i\bm\alpha\cdot\bm\nabla+m\beta)\psi
-f(\psi\sp\ast\beta\psi)\beta\psi,
\end{equation}
with $f(s)=F'(s)$.
It follows that if
$\phi\sb\omega(x)e^{-i\omega t}$
is a solitary wave solution
to the nonlinear Dirac equation \eqref{soler},
then $\phi\sb\omega(x)e^{+i\omega t}$
is a solitary wave solution to the nonlinear Dirac equation
\begin{equation}
i\dot\psi=
-i\hat{\bm\alpha}\cdot\bm\nabla
+(m-\hat f(\psi\sp\ast\hat\beta\psi))
\hat\beta\psi,
\end{equation}
with $\hat\alpha\sp j=-\alpha\sp j$,
$\hat\beta=-\beta$, and $\hat f(s)=f(-s)$.
Thus, one can always rewrite
a solitary wave solution with $\omega\in(-m,0)$
as a solitary wave
with $\omega\in(0,m)$.
Therefore, we conclude from
Lemma~\ref{lemma-ni-positive}
that for the Soler model
for pure power nonlinearities
$F(s)=\abs{s}^{k+1}/(k+1)$, $k>0$,
the condition $E(\phi\sb\omega)=0$
is not triggered;
the collisions of eigenvalues
at the origin
are described solely by the VK condition
$\frac{d}{d\omega}Q(\phi\sb\omega)=0$.
We provide the corresponding plots
on Figures~\ref{fig-gn-1} ($k=1/2$ and $k=1$)
and~\ref{fig-gn-2} ($k=2,\,3$).
One can see that the collision of eigenvalues
at the origin is indeed completely described by the
Vakhitov--Kolokolov condition $dQ/d\omega=0$.

\section{Discussion and conclusions}
We have shown that
generically the condition $E(\phi\sb\omega)=0$
indicates the birth of a pair of a positive and negative eigenvalues,
thus possibly marking the border of the linear instability region.
This condition is auxiliary to the Vakhitov--Kolokolov condition $\p\sb\omega Q(\phi\sb\omega)=0$
(see \cite{dirac-vk} on its application in the context of nonlinear Dirac equations).
Together, these two conditions describe the birth of real eigenvalues from the point $\lambda=0$,
as a result of a collision of a pair of purely imaginary eigenvalues.

The real eigenvalues produced after the
collision of eigenvalues at $\lambda=0$,
when $E(\phi\sb\omega)=0$,
correspond to vectors which are essentially
parallel to $\p\sb j\phi\sb\omega$.
Thus, the unstable behavior develops from a slight push,
after which a solitary wave
starts accelerating and loses its shape.

The condition of the energy vanishing
is of general type and is applicable to
any classical model of self-interacting spinors.
We have shown that
in the (generalized) massive Thirring model in (1+1)D
the energy vanishing
correctly predicts the eigenvalue collision,
and also have shown that this condition
is not triggered in the (generalized) Gross--Neveu model.

It can also be shown that for the NLS or Klein--Gordon equations, the
analogous criterion is never triggered
(each Jordan block corresponding to
translations is of size exactly two),
so the collision of eigenvalues at the origin
is completely described by the VK condition
$\p\sb\omega Q(\phi\sb\omega)=0$.
As a result, in the fermionic models,
the energy vanishing condition may take place
away from the nonrelativistic limit.
Indeed, we demonstrated that in the pure power case
the energy vanishing is only possible for solitary waves
with $\omega<0$.

It is important to mention that
this new criterion and the Vakhitov--Kolokolov
criterion do not exhaust all scenaria of instability.
In particular, two pairs of purely imaginary eigenvalues
may collide away from the origin and turn
into a quadruplet of complex eigenvalues with nonzero real part.
It is also possible that the instability
takes over
due to eigenvalues bifurcating from the essential spectrum.
We hope to address these scenaria in the forthcoming research.


\medskip

\noindent
{\bf Acknowledgments.}
We are grateful to Prof. V. Pokrovsky
and Dr. M. Zubkov
for helpful discussions
and two anonymous referees for helpful comments.

\newpage


\begin{figure}[htbp]

\hfill
\includegraphics[width=15cm,height=8cm]{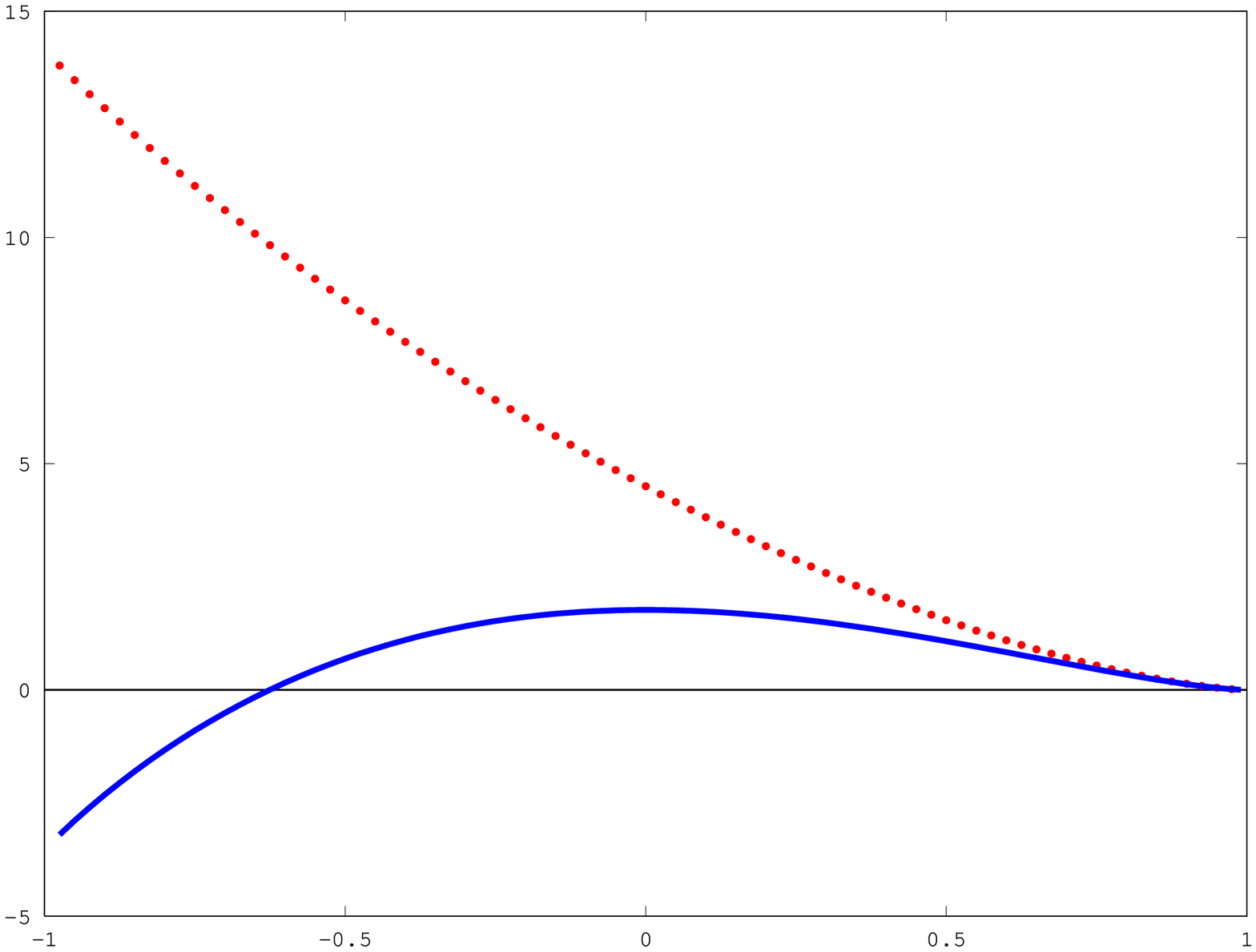}

\vskip -5mm

\hfill
\includegraphics[width=18cm,height=8cm]{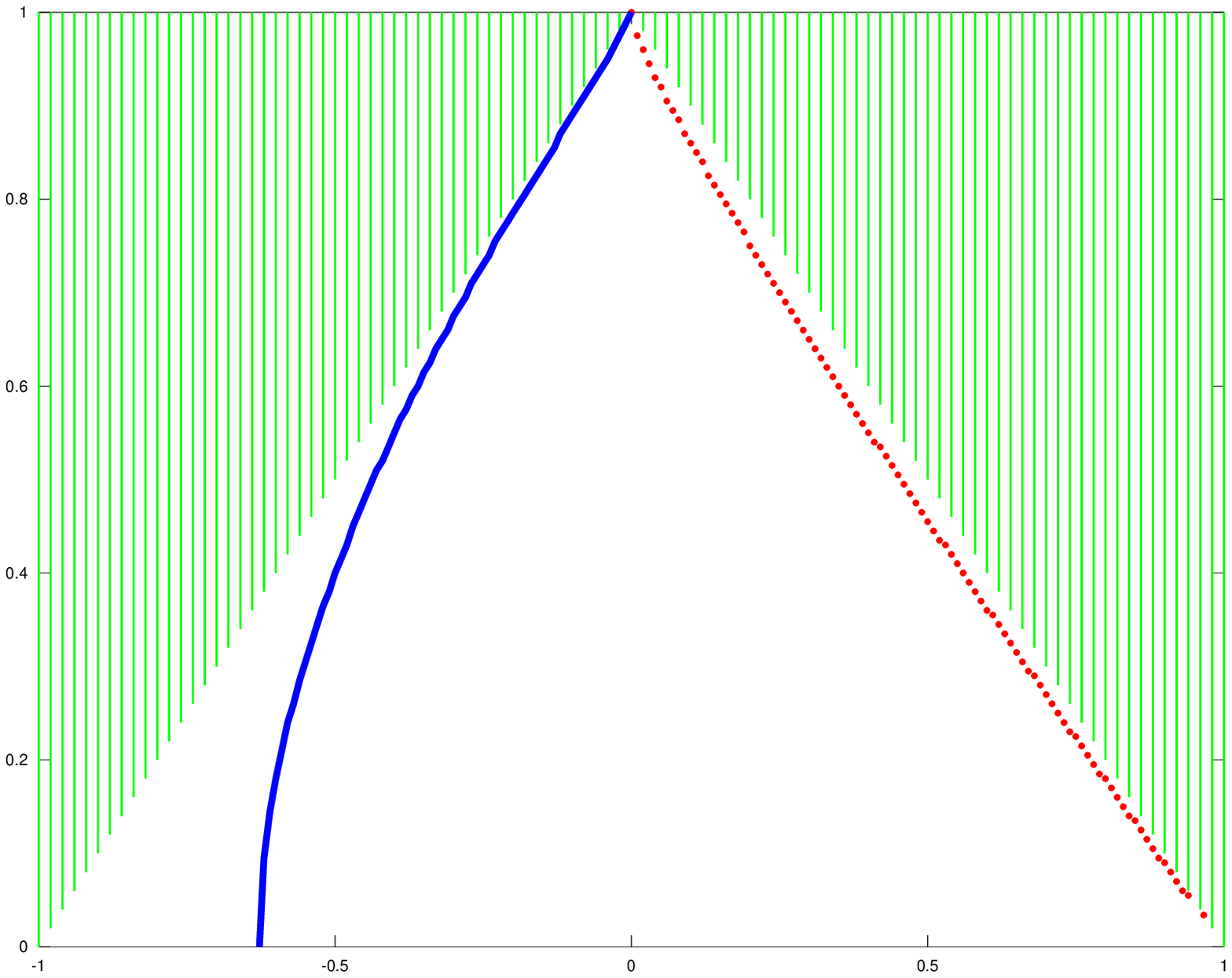}

\caption{\footnotesize
Massive Thirring model
with $k=1/2$.
\hfill\break
TOP: energy (solid line) and charge (dotted line)
as functions of $\omega\in(-1,1)$.
\hfill\break
BOTTOM:
The spectrum of the linearization at a solitary wave
on the upper half of the imaginary axis.
Solid vertical lines symbolize the (upper half of the) essential spectrum.
Dotted and solid curves denote ``even'' and ``odd'' eigenvalues
(of the same as $\phi$ and of the opposite parity, correspondingly).
The dotted eigenvalue
collides with its opposite at the origin
when $\omega=\omega\sb{E}\approx -0.6276$
(at $\omega\sb{E}$, the ``energy condition'' $E(\omega)=0$
is satisfied).
For $\omega\in(-1,\omega\sb{E})$, the spectrum contains
one positive and one negative eigenvalues (not shown).
}
\label{fig-mtm-0.5}
\end{figure}

\newpage

\begin{figure}[htbp]

\hfill
\includegraphics[width=15cm,height=8cm]{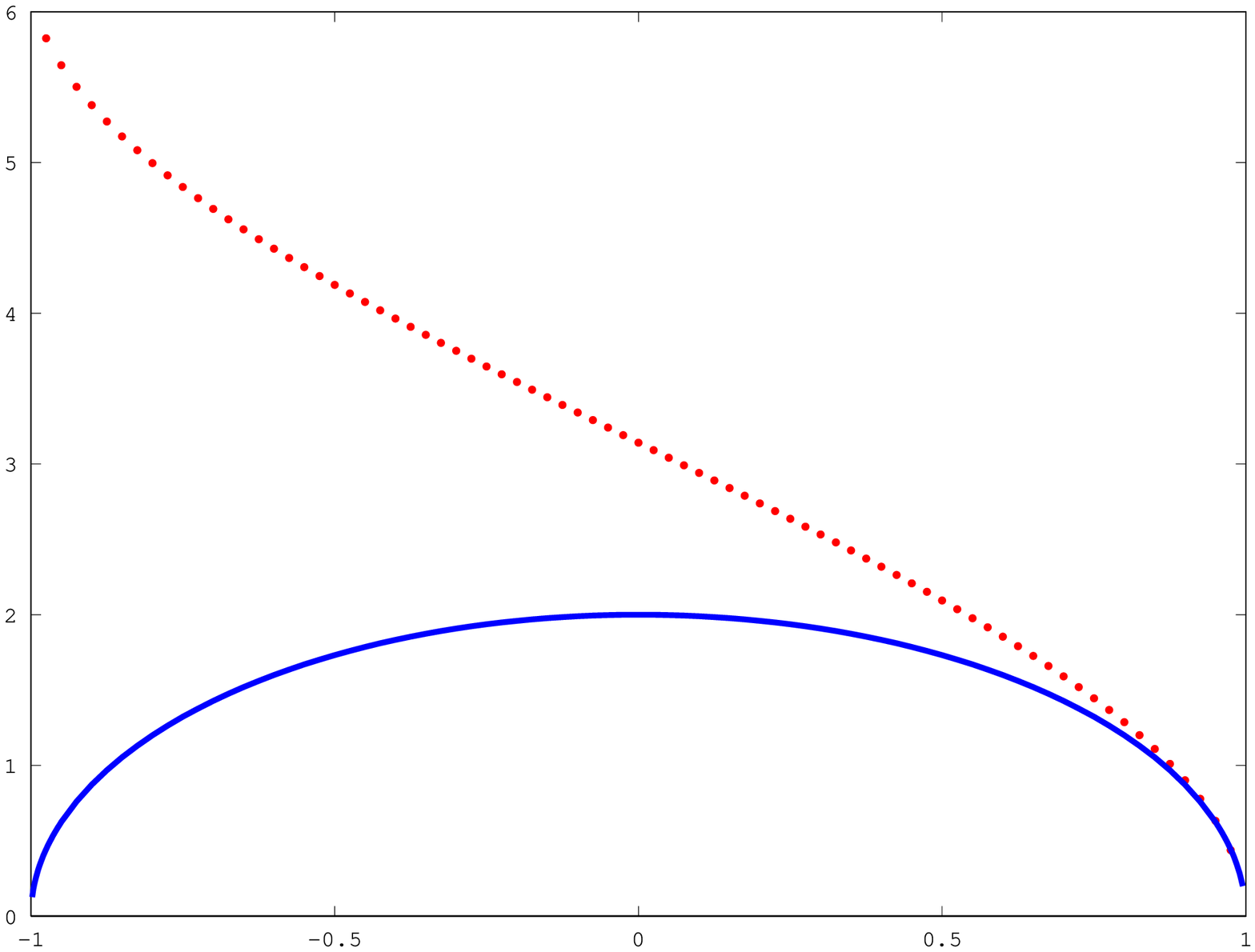}
\vskip -5mm

\hfill
\includegraphics[width=18cm,height=8cm]{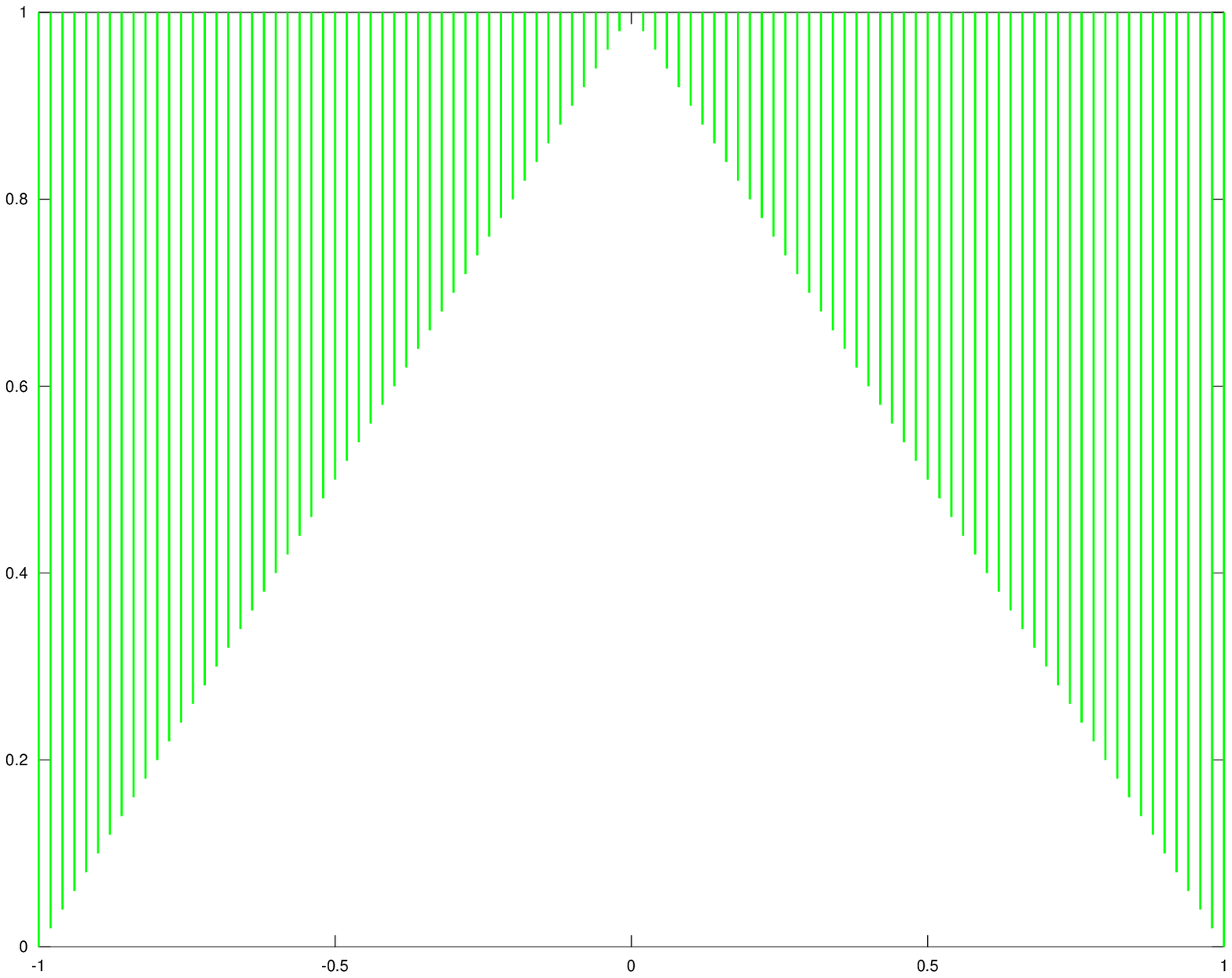}

\caption{\footnotesize
Massive Thirring model
with $k=1$.
\hfill\break
TOP: energy (solid line) and charge (dotted line)
as functions of $\omega\in(-1,1)$.
\hfill\break
BOTTOM:
The spectrum of the linearization at a solitary wave
on the upper half of the imaginary axis.
Note the absence of nonzero eigenvalues in the case
of the completely integrable model.
}

\label{fig-mtm-1}
\end{figure}
\newpage

\begin{figure}[htbp]

\hfill
\includegraphics[width=15cm,height=8cm]{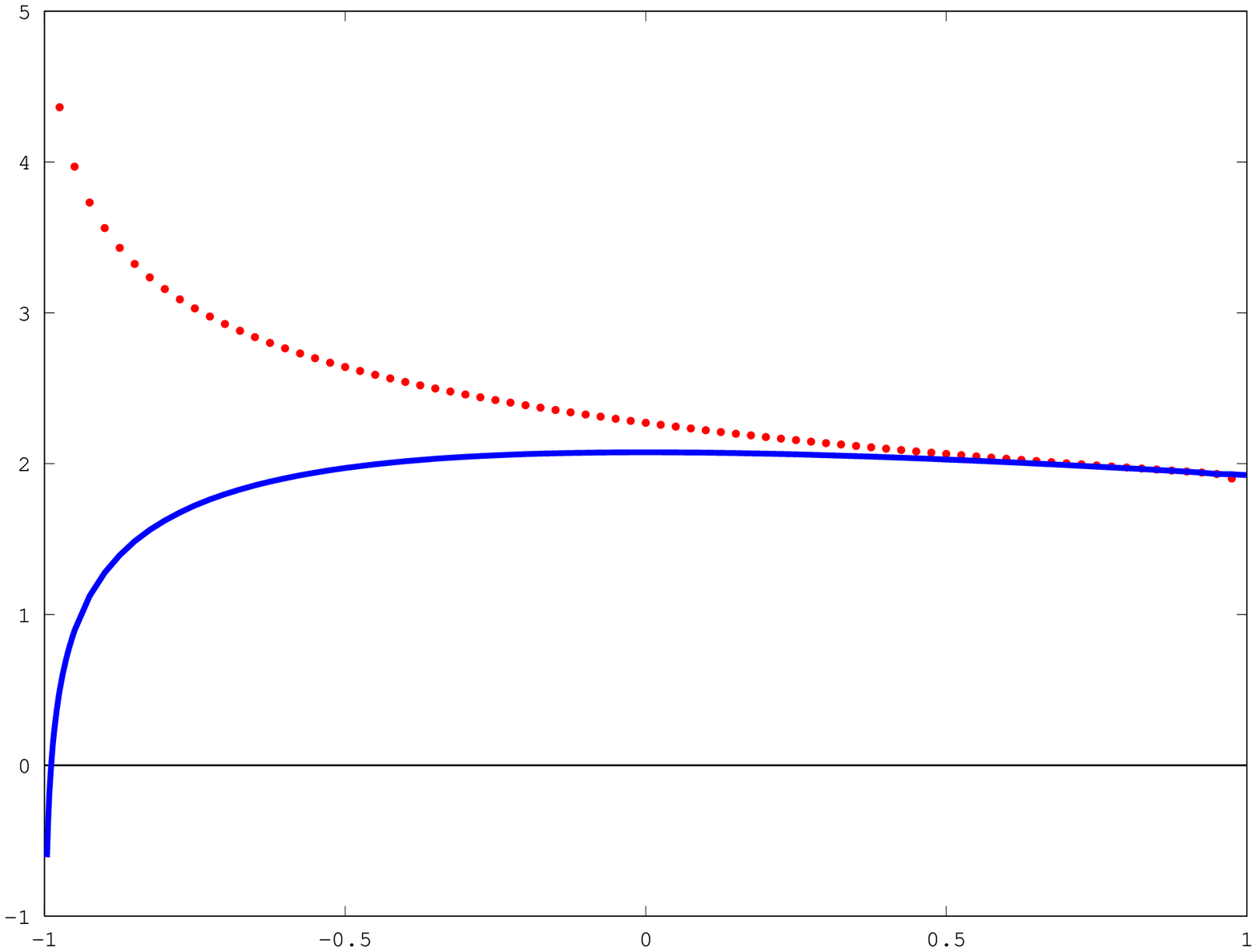}
\vskip -5mm

\hfill
\includegraphics[width=18cm,height=8cm]{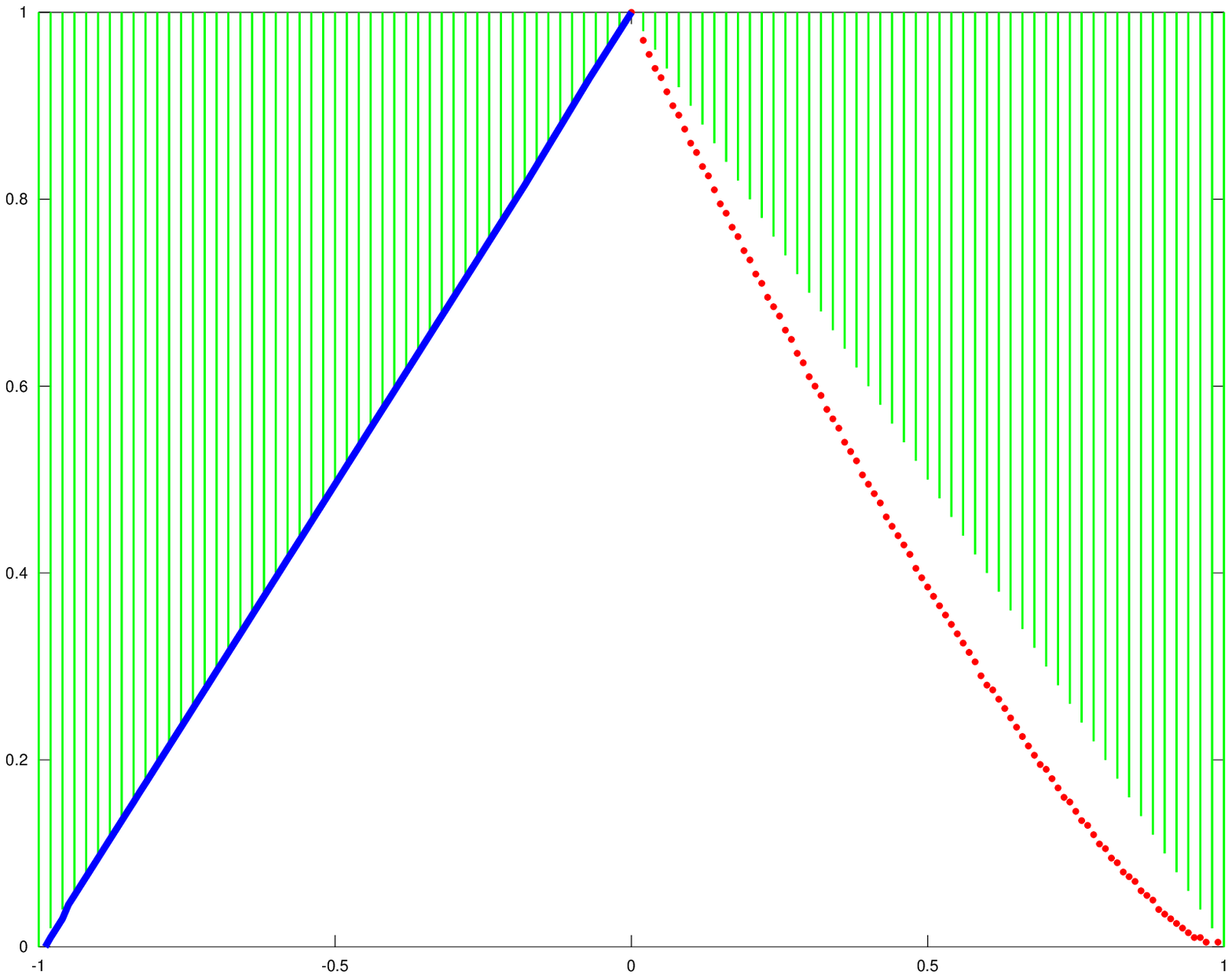}

\caption{\footnotesize
Massive Thirring model
with $k=2$.
\hfill\break
TOP: energy (solid line) and charge (dotted line)
as functions of $\omega\in(-1,1)$.
\hfill\break
BOTTOM:
The spectrum of the linearization at a solitary wave
on the upper half of the imaginary axis.
For $\omega\lesssim 1$,
note the presence of a purely imaginary eigenvalue
near $\lambda=0$
(dotted line)
whose trajectory is tangent
to the horizontal axis;
this is due to quintic NLS being charge-critical
in one spatial dimension.
\hfill\break
For $\omega\in(\omega\sb{E},0)$, there is a purely imaginary eigenvalue
near the threshold $\lambda=i(1-\abs{\omega})$.
It collides with its opposite at the origin
when $\omega=\omega\sb{E}\gtrsim-1$,
with $\omega\sb{E}$
corresponding to a solitary wave with zero energy.
For $\omega\in(-1,\omega\sb{E})$,
there is one positive and one negative eigenvalues
in the spectrum (not shown).
}
\label{fig-mtm-2}
\end{figure}
\newpage

\begin{figure}[htbp]

\hfill
\includegraphics[width=15cm,height=8cm]{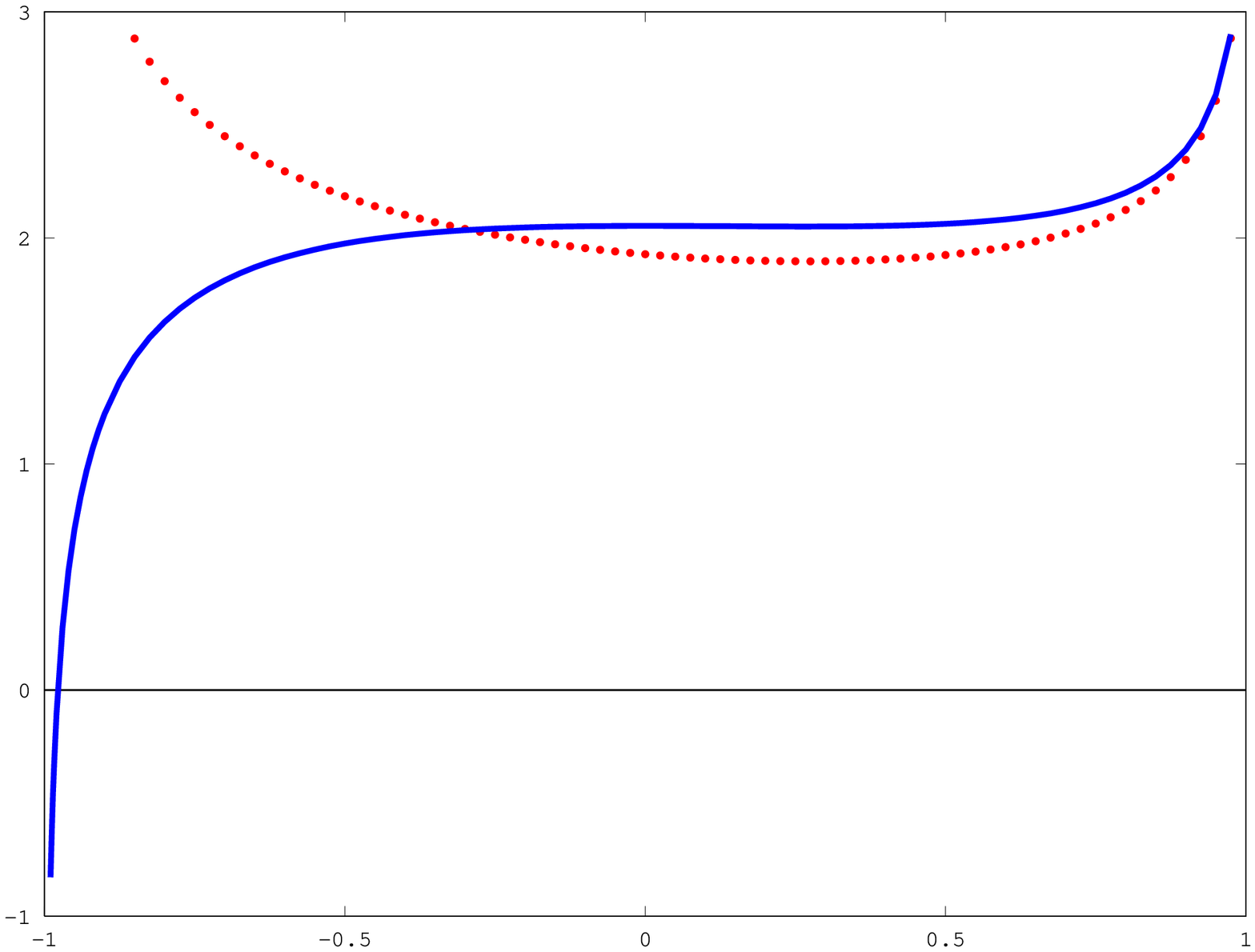}
\vskip -5mm

\hfill
\includegraphics[width=18cm,height=8cm]{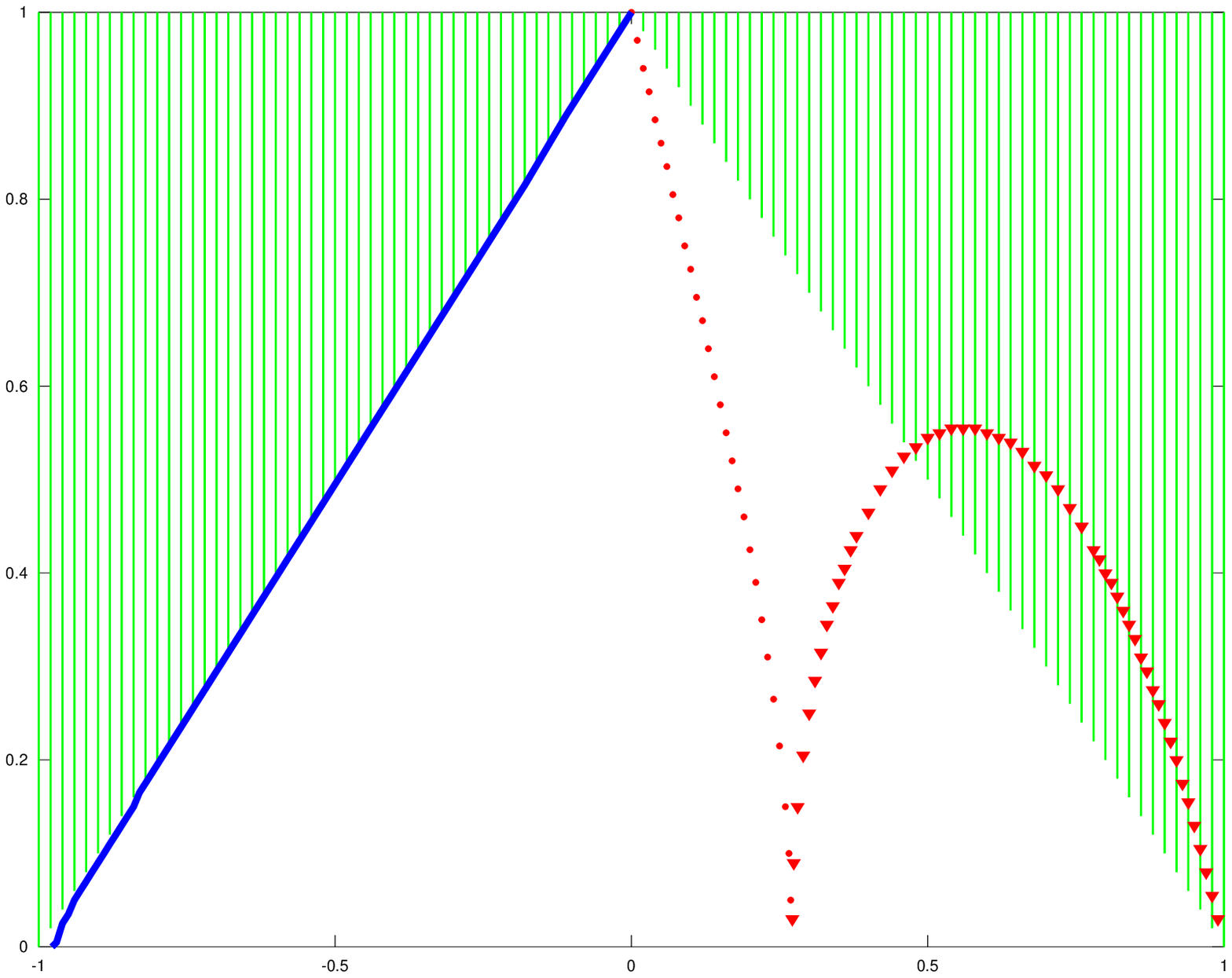}

\caption{\footnotesize
Massive Thirring model
with $k=3$.
\hfill\break
TOP: energy (solid line) and charge (dotted line)
as functions of $\omega\in(-1,1)$.
\hfill\break
BOTTOM:
The spectrum of the linearization at a solitary wave
on the upper half of the imaginary axis.
This case is charge-supercritical;
for $\omega\lesssim 1$,
there is a positive eigenvalue
(its trajectory is shown by triangles on the plot).
At $\omega=\omega\sb{\mathrm{VK}}$ (when the Vakhitov--Kolokolov condition
$dQ/d\omega=0$ is satisfied),
this real positive eigenvalue collides at the origin
with its opposite, producing a pair of purely imaginary eigenvalues
(the one with positive imaginary part is given by
the dotted line on the plot).
\hfill\break
For $\omega\in(\omega\sb{E},0)$, there is a purely imaginary eigenvalue
near the threshold $\lambda=i(1-\abs{\omega})$.
It collides with its opposite at the origin
when $\omega=\omega\sb{E}\gtrsim-1$,
where $\omega\sb{E}$ corresponds to a solitary wave
of zero energy.
For $\omega\in(-1,\omega\sb{E})$,
there is one positive and one negative eigenvalues
in the spectrum (not shown).
}

\label{fig-mtm-3}
\end{figure}

\newpage

\begin{figure}[htbp]

\includegraphics[width=8cm, height=8cm]{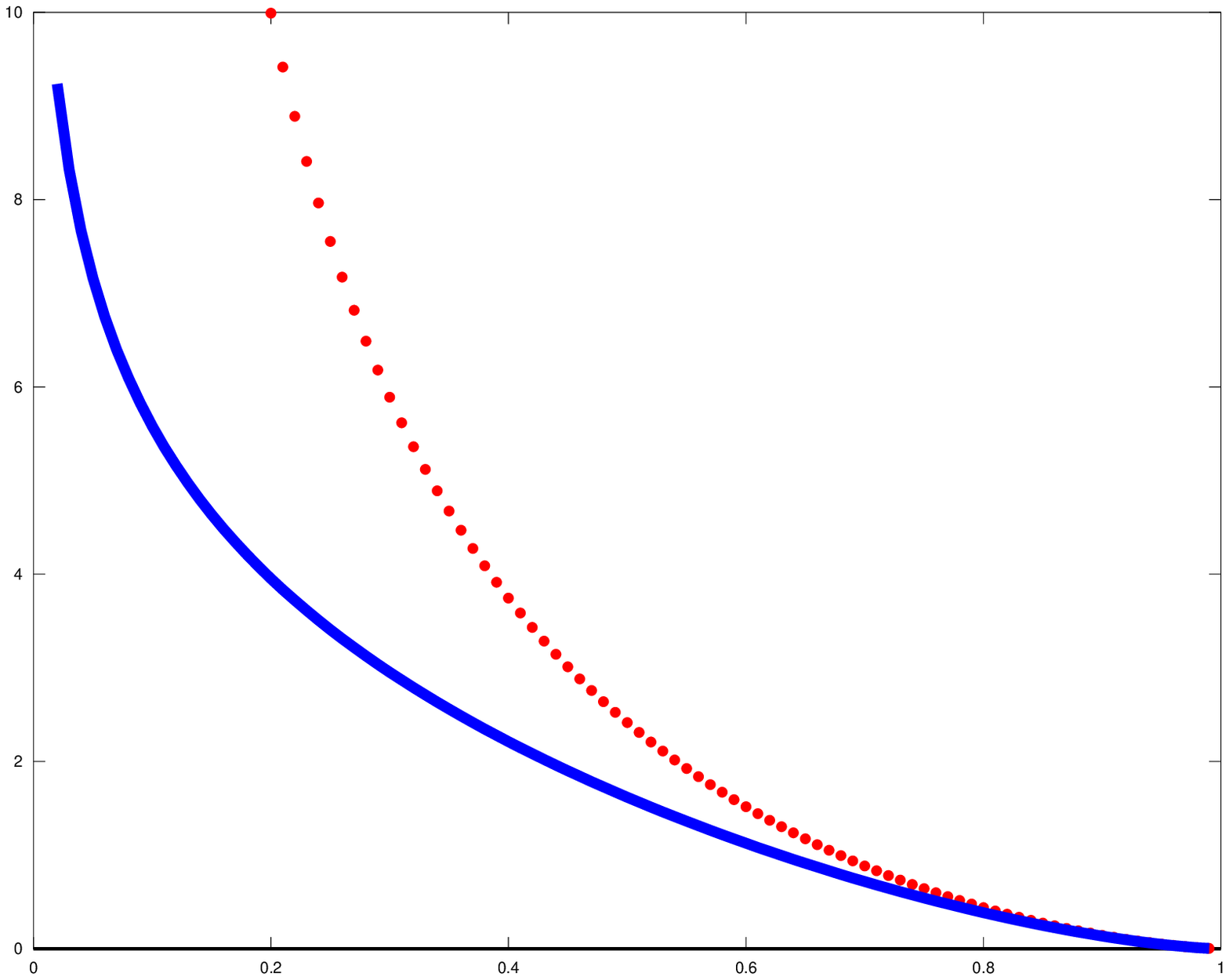}
\qquad
\includegraphics[width=8cm, height=8cm]{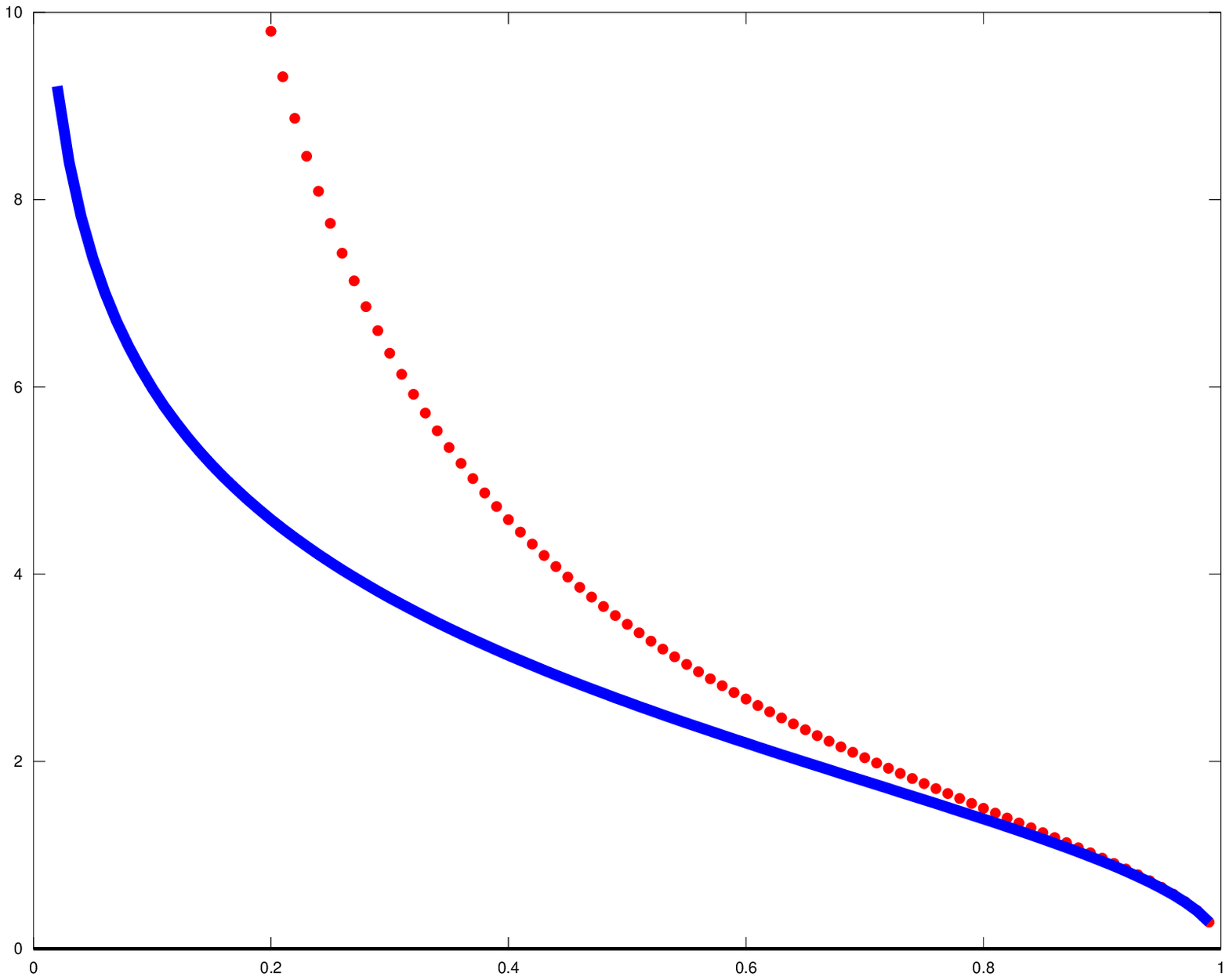}

\vskip -10mm

\includegraphics[width=8cm, height=8cm]{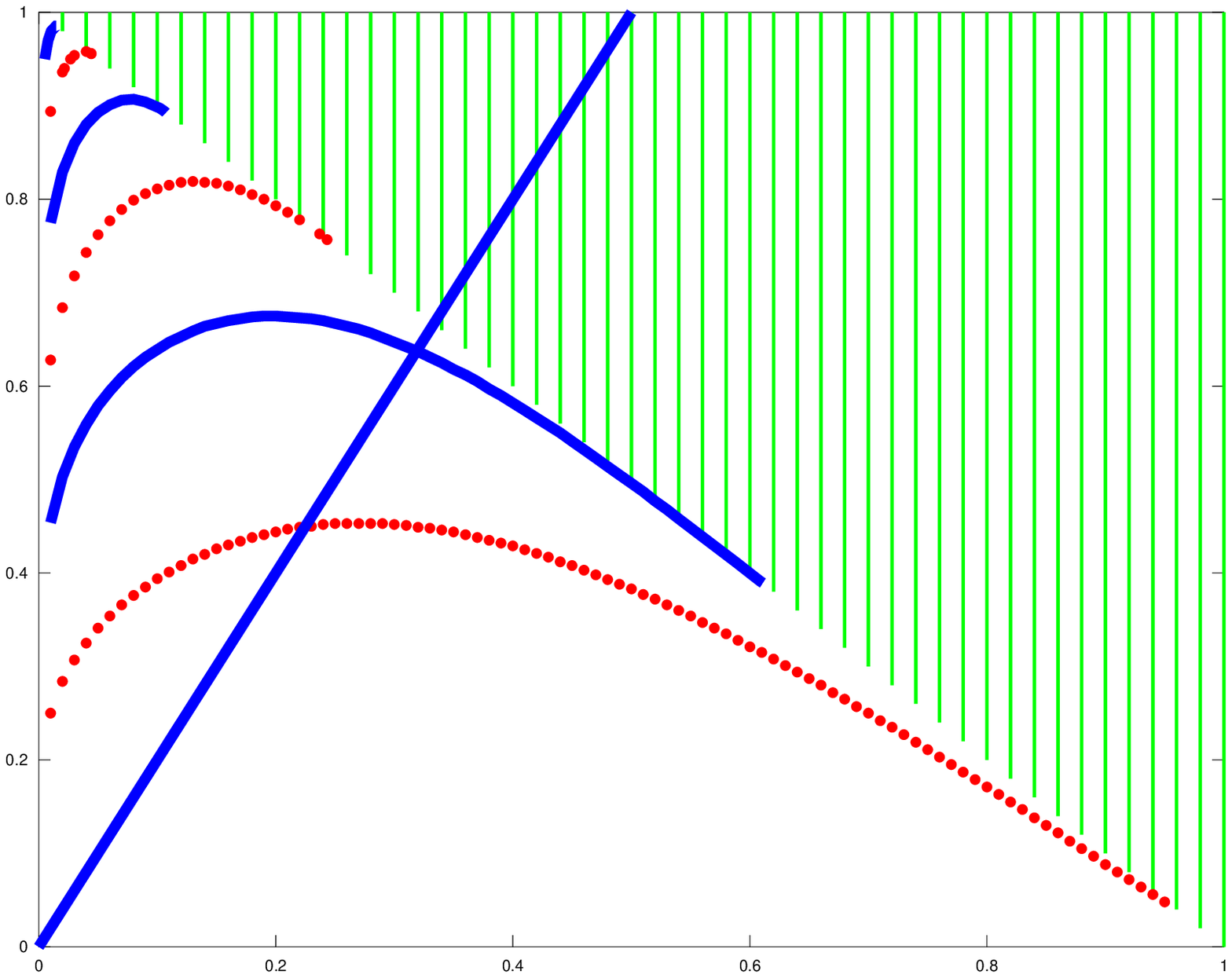}
\qquad
\includegraphics[width=8cm, height=8cm]{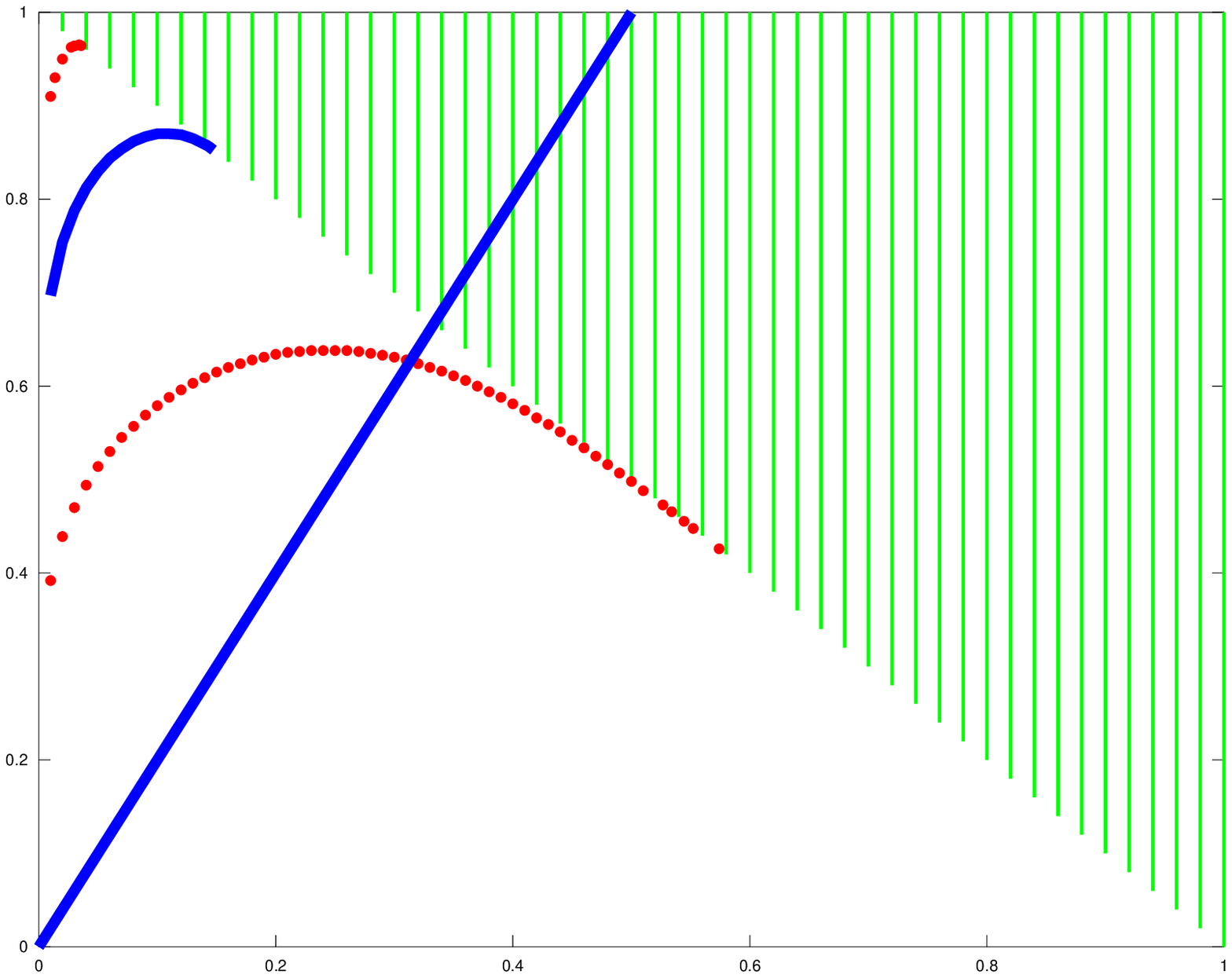}

\caption{\footnotesize
Gross--Neveu model. LEFT: $k=1/2$; RIGHT: $k=1$.
\hfill\break
TOP ROW:
charge (dotted line) and energy (solid line)
of the solitary waves
as functions of $\omega\in(0,1)$.
\hfill\break
BOTTOM ROW:
Spectrum on the upper half of the imaginary axis.
On each of the plots
in the bottom raw,
note the exact eigenvalue $\lambda=2\omega i$
\cite{dirac-vk}.
We already presented
the spectrum corresponding to $k=1$
(with a detailed description of our numerical methods)
in \cite{MR2892774}.
}

\label{fig-gn-1}
\end{figure}

\begin{figure}[htbp]

\includegraphics[width=8cm, height=8cm]{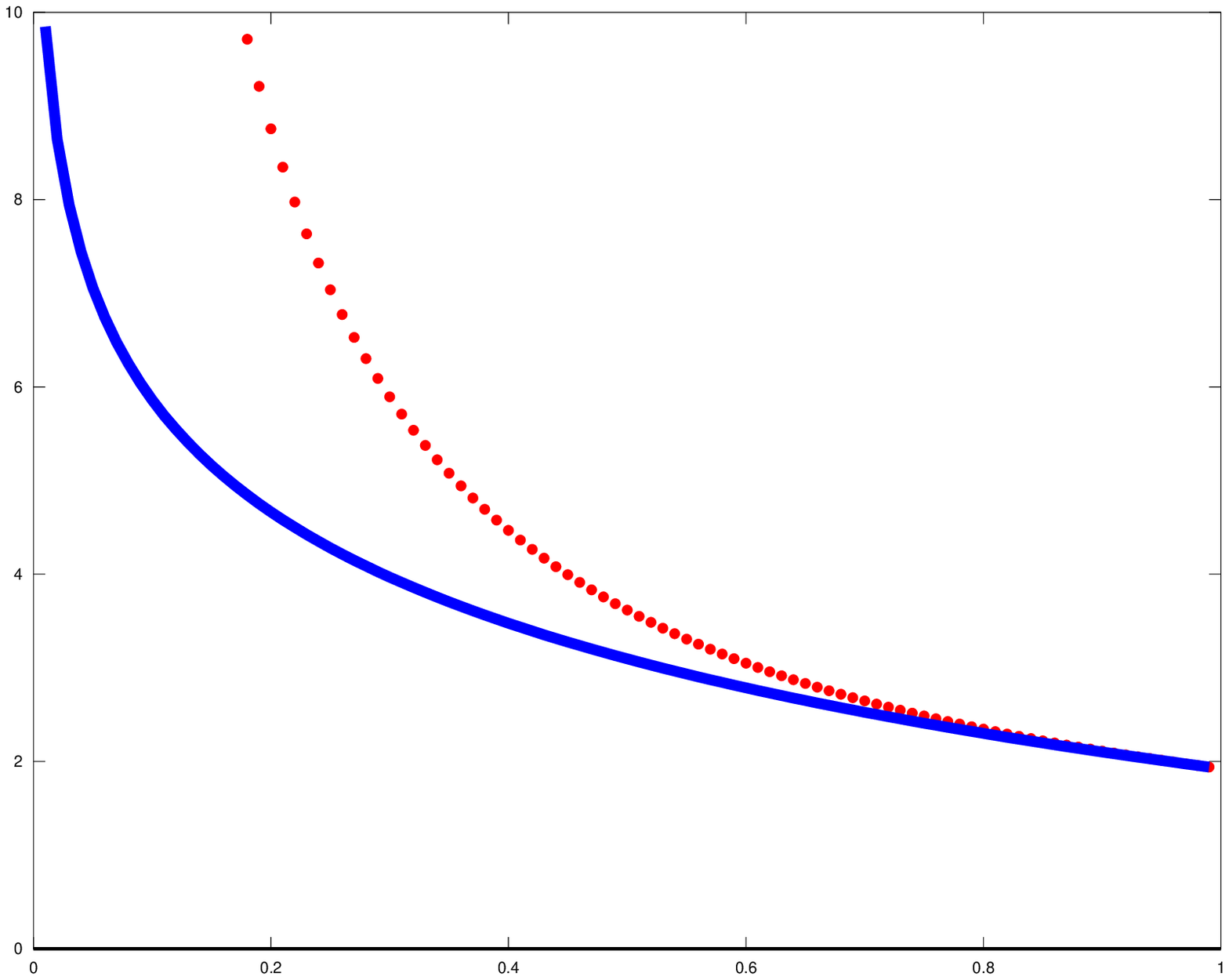}
\qquad
\includegraphics[width=8cm, height=8cm]{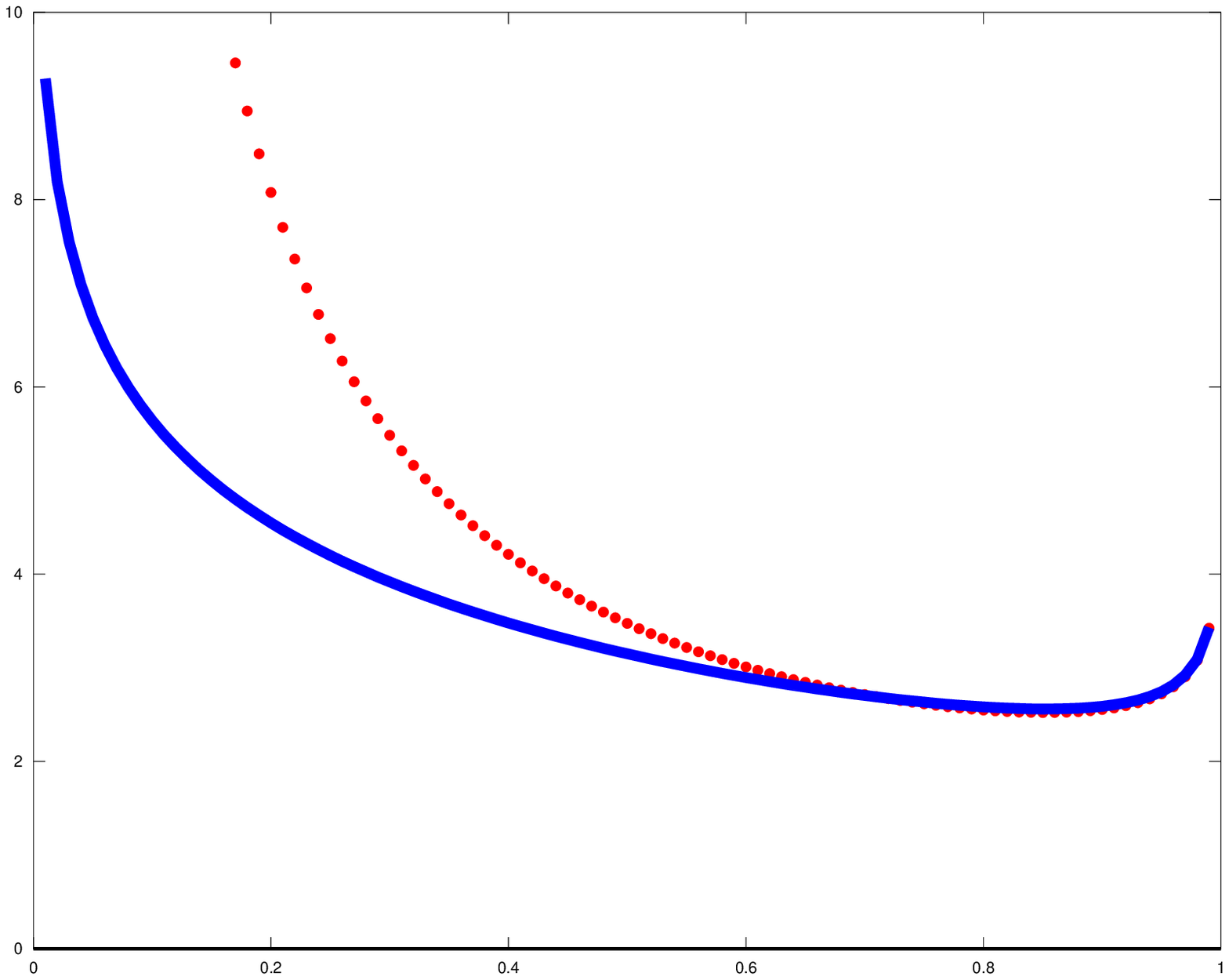}

\vskip -10mm

\includegraphics[width=8cm, height=8cm]{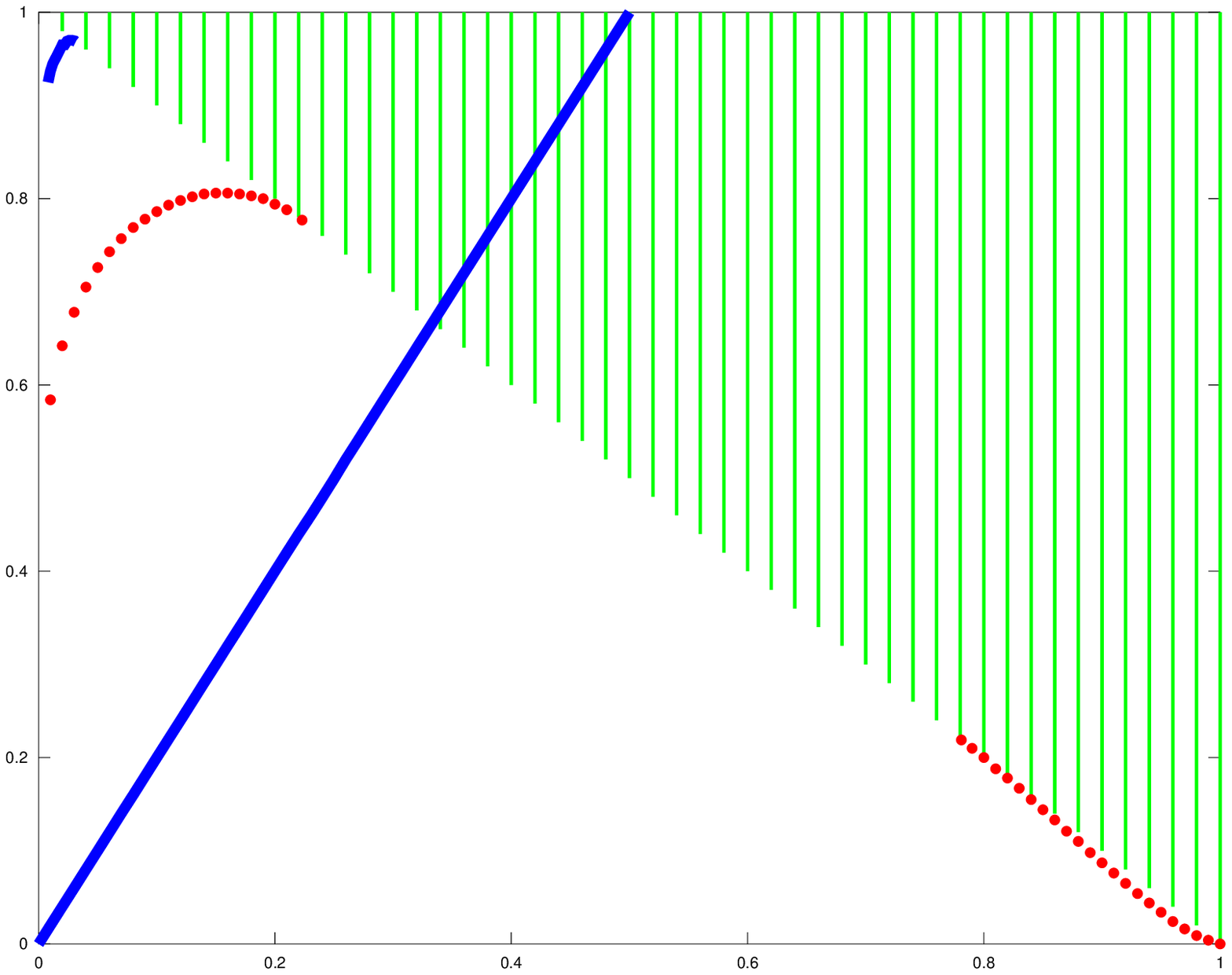}
\qquad
\includegraphics[width=8cm, height=8cm]{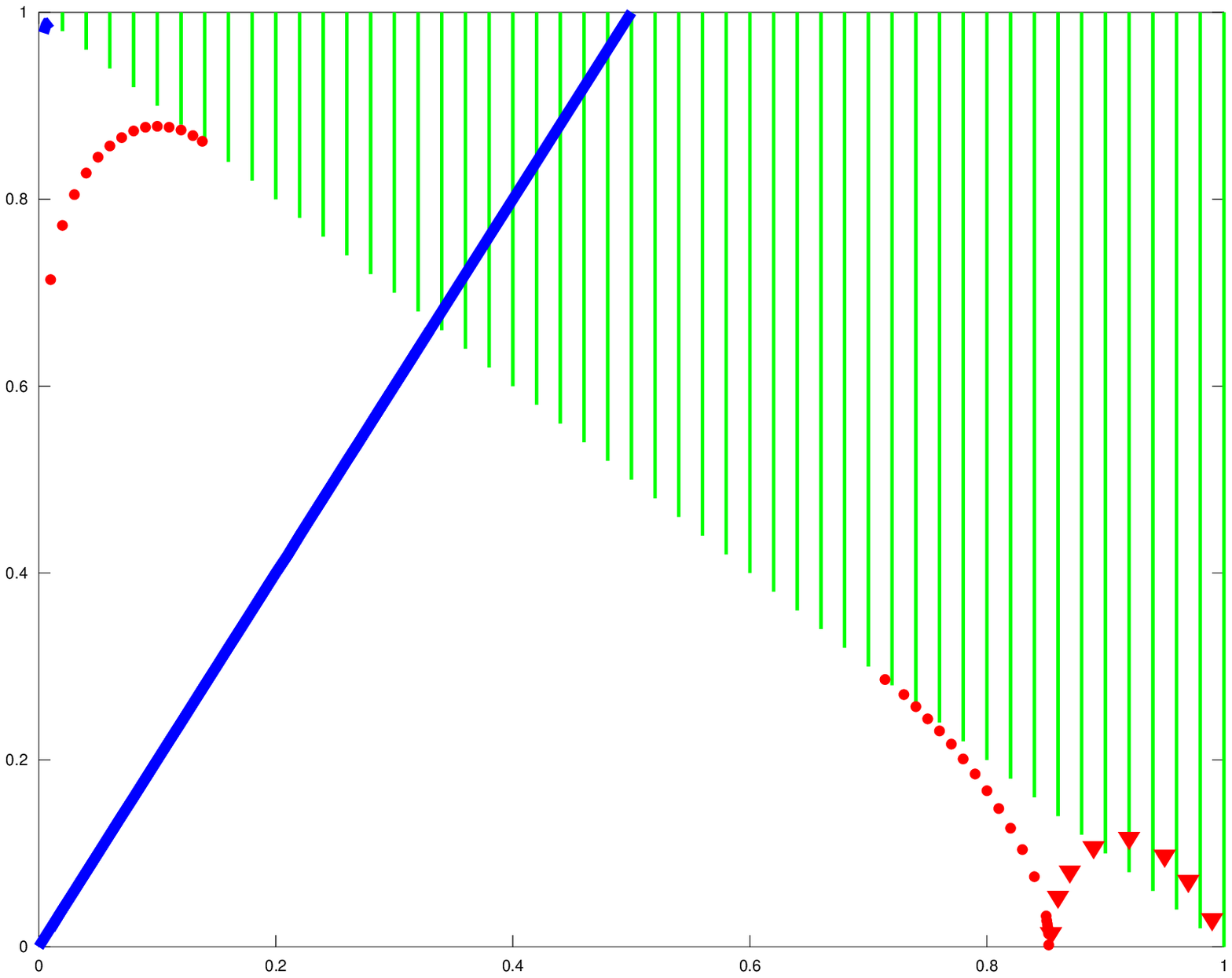}

\caption{\footnotesize
Gross--Neveu model. LEFT: $k=2$; RIGHT: $k=3$.
\hfill\break
TOP ROW:
charge (dotted line) and energy (solid line)
of the solitary waves
as functions of $\omega\in(0,1)$.
\hfill\break
BOTTOM ROW:
Spectrum on the upper half of the imaginary axis.
\hfill\break
The case $k=2$ is charge-critical;
note the purely imaginary eigenvalue whose trajectory
is tangent to $\lambda=0$
for $\omega\lesssim 1$.
\hfill\break
The case $k=3$ is charge-supercritical;
there is a real eigenvalue
(plotted with triangles)
born from the origin at $\omega=1$
and persisting $\omega\lesssim 1$.
At $\omega=\omega\sb{\mathrm{VK}}$ (when the Vakhitov--Kolokolov
condition $dQ/d\omega=0$ is satisfied),
this eigenvalue collides with its opposite
at the origin, producing a pair of purely imaginary
eigenvalues (one with the positive imaginary part
is plotted with dots).
}

\label{fig-gn-2}

\end{figure}

\clearpage
\newpage

\bibliography{all}
\bibliographystyle{sima-doi}

\end{document}